\NewDocumentCommand{\basis}{om}{%
	\IfNoValueTF{#1}
	{\| \Phi^{#2}\|^2}
	{\langle \Phi^{#1},\phi^{#2} \rangle}%
}
\newcommand{\da}{\mathrm{d}}
\newcommand{\ud}{\neg\mathrm{d}}
\newcommand{\rank}{\mathrm{rank}}
\newcommand{\ik}[1]{\mathcal{I}_{#1}^v}
\newcommand{\pce}[1]{\mathsf{#1}}
\newcommand{\tra}[3]{#1_{[#2,#3]}}
\newcommand{\trad}[3]{#1_{[#2,#3]}^{\mathrm{d}}}
\newcommand{\traud}[3]{#1_{[#2,#3]}^{\neg\mathrm{d}}}
\newcommand{\ini}{\text{ini}}
\newcommand{\tini}{T_\text{ini}}
\newcommand{\mbb}[1]{\mathbb{#1}}
\newcommand{\mcl}[1]{\mathcal{#1}}
\newcommand{\splk}[2]{\mcl{L}^2(\Omega, \mathcal{F}_{#1}, \mu; \mathbb{R}^{#2})}
\newcommand{\splx}[1]{\mcl{L}^2(\Omega, \mathcal{F}, \mu; \mathbb{R}^{#1})}
\newcommand{\lsp}{\mcl{L}^2} 
\newcommand{\diff}{\mathop{}\!\mathrm{d}}
\newcommand{\mean}{\mbb{E}}
\newcommand{\var}{\mbb{V}}
\newcommand{\dimy}{{n_y}}
\newcommand{\dimx}{{n_x}}
\newcommand{\dimu}{{n_u}}
\newcommand{\dimw}{{n_w}}
\newcommand{\dimz}{{n_z}}
\newcommand{\I}{\mathbb{I}}
\newcommand{\N}{\mathbb{N}}
\newcommand{\R}{\mathbb{R}}
\newcommand{\Hankel}{\mcl{H}}
\newcommand{\End}{\hfill $\square$}
\newtheorem{cor}{Corollary}
\newtheorem{lem}{Lemma}
\newtheorem{defn}{Definition}
\newtheorem{rem}{Remark}
\newtheorem{assum}{Assumption}
\newtheorem{exmp}{Example}
\journal{Journal of \LaTeX\ Templates}
\begin{document}
\begin{frontmatter}
\title{A Stochastic Fundamental Lemma with Reduced Disturbance Data Requirements}

\author{Ruchuan Ou}\ead{ruchuan.ou@tuhh.de}
\author{Guanru Pan}\ead{guanru.pan@tuhh.de}  

\author{Timm Faulwasser\corref{mycorrespondingauthor}}
\cortext[mycorrespondingauthor]{Corresponding author}
\ead{timm.faulwasser@ieee.org}

\address{Institute of Control Systems, Hamburg University of Technology, 21079 Hamburg, Germany}                                        

\begin{keyword}                            
Fundamental lemma, stochastic systems, polynomial chaos, reduced disturbance data, non-Gaussian uncertainties, data-driven control
\end{keyword}

\begin{abstract}
Recently, the fundamental lemma by Willems et al. has been extended towards stochastic LTI systems subject to process disturbances. Using this lemma requires previously recorded data of inputs, outputs, and disturbances. In this paper, we exploit causality concepts of stochastic control to propose a variant of the stochastic fundamental lemma that does not require past disturbance data in the Hankel matrices. Our developments rely on polynomial chaos expansions and on the knowledge of the disturbance distribution. Similar to our previous results, the proposed variant of the fundamental lemma allows to predict future input-output trajectories of stochastic LTI systems. We draw upon a numerical example to illustrate the proposed variant in data-driven control context.
\end{abstract}
\end{frontmatter}

\section{Introduction} \label{sec:Introduction}
Data-driven system representations based on the fundamental lemma by \citet{willems05note} are of continued and increasing research interest. The main insight is that the trajectories of any controllable Linear Time Invariant (LTI) systems can be characterized without explicit identification of a state-space model. Specifically, under the assumption on the persistency of excitation, any finite trajectory of an LTI system lives in the column space of a Hankel matrix constructed from the recorded input-output data. 
Beyond the deterministic LTI setting, there are some variants of the lemma, e.g., extensions to nonlinear systems \citep{alsalti21data,lian21koopman}, to linear parameter-varying systems \citep{verhoek21data}, to kernel representation of nonlinear systems \citep{molodchyk24exploring}. Especially, some recent works have extended the fundamental lemma towards stochastic predictive control schemes, see \citet{kerz23data} for a tube-based approach, \citet{teutsch24sampling} for a sampling-based approach, \citet{chiuso25harnessing} for a moment-based approach via a separation principle, and \citet{breschi23data} for a unified framework using regularization. We refer to \citet{markovsky21behavioral} for an overview and refer to \citet{willems13open} for early discussions of behavioral concepts for stochastic systems.

In previous work, we proposed a stochastic variant of the fundamental lemma for stochastic LTI systems subject to process disturbances in the framework of Polynomial Chaos Expansions (PCE) \citep{pan23stochastic}.
PCE dates back to \citet{wiener38homogeneous} and has been generalized to a wide class of non-Gaussian distributions by \citet{xiu02wiener}. The core idea of PCE is that one can use suitable orthogonal bases of an $\lsp$  Hilbert space to model any square integrable random variable via a series expansion. Therefore, $\lsp$ random variables can be parameterized linearly by deterministic coefficients in appropriately chosen polynomial bases. Indeed, one may represent any  $\lsp$ random variable with an affine PCE with two terms. We refer to \citet{sullivan15introduction} for a general introduction and to \citet{kim13wiener} for an early overview on control design using PCE. PCE was first introduced in stochastic optimal control to foster computation \citep{fagiano12nonlinear,paulson14fast}, it also has prospects for system theory and analysis \citep{paulson15stability, ahbe20region, faulwasser23behavioral}.

The stochastic fundamental lemma by \citet{pan23stochastic} requires measurements of the past disturbances to construct the Hankel matrices. As such measurements are, in general, costly or difficult to obtain, the main goal of this paper is to obtain a modified stochastic fundamental lemma that does not require past disturbance data in the online computation. Using causality properties, we show that a stochastic LTI system can be decomposed into a number of decoupled stochastic subsystems, each of which captures the effect of one source of stochastic uncertainty and how it propagates via the dynamics. In each subsystem, we convert the additive process disturbance into an output initial condition using causality such that the subsystem is not subject to disturbances. Combining this with PCE, we derive corresponding data-driven representations of the PCE reformulated subsystems without disturbance data in the Hankel matrices under mild conditions. Moreover, we provide a modified stochastic fundamental lemma in random variables based on the results in PCE coefficients. Drawing upon an example, we demonstrate the numerical advantages of the modified approach compared to the stochastic fundamental lemma for data-driven stochastic Optimal Control Problems (OCP). Importantly, this work should not be understood as a means of overcoming or replacing subspace identification or the stochastic fundamental lemma by \citet{pan23stochastic}, but rather as a complementary approach to existing tools that is closely linked to stochastic extensions of behavioral systems theory. The main focus of this paper is on uncertainty propagation, i.e., the computation of random variable trajectories of stochastic LTI systems, and the use of our results for control is beyond the scope.

The remainder is structured as follows: In Section~\ref{sec:Preliminaries} we recall the setting and problem statement, while in Section~\ref{sec:MainResults} we present the modified stochastic fundamental lemma that does not require past disturbance data in the online computation. Section~\ref{sec:Simulation} illustrates a numerical example. The paper ends with conclusions in Section~\ref{sec:Conclusion}.

\subsubsection*{Notation}
Let $(\Omega,\mathcal F,\mu)$ be a probability space, where $\Omega$ is the set of possible outcomes, $\mathcal{F}$ is a $\sigma$-algebra, and $\mu$ is the considered probability measure.
$\splx{n_z}$  is the space of vector-valued random variables of finite covariance and of dimension $n_z$.
Let $Z: \I_{[0,T-1]} \rightarrow \splx{n_z}$ be a sequence of vector-valued random variables from time instant 0 to $T-1$. We denote by $\mean[Z]$ and $z\coloneqq Z(\omega) \in \R^{n_z}$ its mean and realization, respectively. The vectorizations of $z$ and $Z$ are written as $z_{[0,T-1]} \coloneqq [z_0^\top,z_1^\top, \dots,z_{T-1}^\top]^\top \in \R^{n_z T}$ and $Z_{[0,T-1]}\in\splx{n_zT}$, respectively.
We denote the identity matrix of size $n$ by $I_n$.
For any matrix $Q\in \mbb{R}^{n\times m}$ with columns $q^1, \dots, q^m$, the column-space is denoted by $\mathrm{colsp}(Q) \coloneqq \mathrm{span}\left(\{q^1,\dots, q^m\}\right)$.

\section{Preliminaries} \label{sec:Preliminaries}
We consider stochastic discrete-time LTI systems
\begin{subequations} \label{eq:Dyn}
	\begin{align}
		X_{k+1} &= AX_k + BU_k + EW_k,\quad X_0=\tilde{X}_0,\\
		Y_k &= CX_k\label{eq:Output}
	\end{align}
\end{subequations}
with state $X_k \in\splk{k}{n_x}$, $X_k:\Omega\to\R^{n_x}$ and process disturbance $W_k \in \splx{n_w}$.
Measurement noise is not included in~\eqref{eq:Dyn}, i.e., the measurements of the output are exact.

In the filtered probability space $(\Omega, \mcl F, (\mcl F_k)_{k\in \N}, \mu)$, $(\mcl F_k)_{k\in \N}$ is the smallest filtration that the stochastic process $X$ is adapted to. That is, $\mcl F_k = \sigma(X_i,i\leq k)$, where $\sigma(X_i,i\leq k)$ denotes the $\sigma$-algebra generated by $X_i,i\leq k$. Moreover, $\mcl F$ is the $\sigma$-algebra that contains all available historical information, i.e., $\mcl F_0 \subseteq \mcl F_1 \subseteq ...  \subseteq \mcl F$. We model $Y_k$ and $U_k$, $k\in\N$ as stochastic processes adapted to the filtration $(\mathcal{F}_k)_k$, i.e. $Y_k \in\splk{k}{\dimy}$ and $U_k \in\splk{k}{\dimu}$. This immediately imposes a causality constraint on $Z_k$, $Z\in\{U,X,Y\}$, i.e., $Z_k$ only depends on past disturbances $W_j$, $j< k$. It directly follows from Lemma 1.14 by \citet{kallenberg21foundations} that inputs $U_k$, $k\in\N$ adapted to the filtration $\mcl F_k$ are equivalent to feedback polices based on historical state information. For more details on filtrations we refer to \citet{fristedt13modern}. For the sake of readability, the space $\splk{k}{n_z}$ is written as $\lsp_{k}(\R^\dimz)$.

For a specific outcome $\omega \in \Omega$, the realization $W_k(\omega)$ is denoted as $w_k$. Likewise, we write $u_k \coloneqq U_k(\omega)$, $x_k \coloneqq X_k(\omega)$, and $y_k \coloneqq Y_k(\omega)$. Given a specific initial condition $\tilde{x}_0$ and a sequence of disturbance realizations $w_k$, the realization dynamics induced by \eqref{eq:Dyn} are
\begin{subequations} \label{eq:DynReal}
	\begin{align} 
		x_{k+1} &= Ax_k + Bu_k +  Ew_k,\quad x_0=\tilde{x}_0,\\
		y_k &= Cx_k.
	\end{align}
\end{subequations}
Then we make the following assumption that was also adopted in the original work of \citet{willems05note}.
\begin{assum}[System properties] \label{ass:Sys}
	\quad We assume $(A,B)$ controllable and $(A,C)$ observable in system~\eqref{eq:DynReal}.
\end{assum}
If system~\eqref{eq:DynReal} is observable, the lag $\ell$ is the minimum integer such that the observability matrix $\mcl{O}_{\ell}\coloneqq [C^\top, (CA)^\top,...,(CA^{\ell-1})^\top]^\top$ is of full column rank, i.e., $\text{rank}(\mcl{O}_{\ell})$ $=n_x$. Thus, for any $\tini\in\N$ satisfying $\tini\geq\ell$, $\text{rank}(\mcl{O}_{\tini})$ $=n_x$ holds.

\subsection{Polynomial Chaos Expansion}
The key idea of PCE is that any $\lsp$ random variable can be expressed in a suitable (orthogonal) polynomial basis. Consider an orthogonal polynomial basis $\{\phi^j(\xi)\}_{j=0}^{\infty}$ that spans the space $\mcl{L}^2(\Omega, \mathcal{F}, \mu; \mathbb{R})$, i.e.,
\begin{equation} \label{eq:Orthogonality}
	\langle \phi^i(\xi),\phi^j(\xi) \rangle = \int_{\Omega} \phi^i(\xi) \phi^j(\xi) \diff \mu = \delta^{ij}\langle\phi^j(\xi) \rangle^2,
\end{equation}
where $\xi\in \lsp(\Omega, \mathcal{F}, \mu; \Xi)$ is the argument of polynomials, $\langle \phi^j(\xi) \rangle^2\coloneqq \langle \phi^j(\xi),\phi^j(\xi) \rangle$, and $\delta^{ij}$ is the Kronecker delta. Note that $\xi:\Omega\to\Xi$ is a function of the outcome $\omega$ and hence $\phi^j(\xi(\omega)) = \phi^j\circ\xi(\omega)$. The first polynomial is always chosen to be $\phi^0(\xi) = 1$. Hence, the orthogonality~\eqref{eq:Orthogonality} gives that for all other basis dimensions $j>0$, we have $\mean[\phi^j(\xi)]=\int_{\Omega} \phi^j(\xi) \diff \mu= \langle \phi^j(\xi),\phi^0(\xi)\rangle=0$.

The PCE of a real-valued random variable $Z \in  \lsp(\R)$ with respect to the basis $\{\phi^j(\xi)\}_{j=0}^{\infty}$ is 
\[
Z(\omega) = \sum_{j=0}^{\infty}\pce{z}^j \phi^j(\xi(\omega)) \quad \text{with} \quad \pce{z}^j = \frac{\big\langle Z(\omega), \phi^j(\xi(\omega)) \big\rangle}{\langle\phi^j(\xi) \rangle^2},
\]
where $\pce{z}^j\in \R$ is referred to as the $j$-th PCE coefficient.
For the sake of readability, we omit the arguments $\xi(\omega)$, $\omega$ whenever there is no ambiguity. The PCE of a vector-valued random variable $Z\in\lsp(\R^\dimz)$ follows by applying PCE component-wise, i.e., the $j$-th PCE coefficient of $Z$ reads
$\pce{z}^{j} =\begin{bmatrix} \pce{z}^{1,j} & \pce{z}^{2,j} & \cdots & \pce{z}^{n_z,j} \end{bmatrix}^\top$, where $\pce{z}^{i,j}$ is the $j$-th PCE coefficient of $i$-th component $Z^i$, $\forall i\in\I_{[1,n_z]}$. Moreover, as the first basis function is always chosen to be 1, i.e. $\phi^0=1$, we have $\pce{z}^0=\mean[Z]$ from the definition of PCE.

In numerical implementations the expansions have to be truncated after a finite number of terms. This may lead to truncation errors
\[
	\Delta Z(L) = Z -  \sum\nolimits_{j=0}^{L-1}\pce{z}^j \phi^j,
\]
where $L\in\N^\infty\coloneqq \N^+\cup\{\infty\}$ is the PCE dimension. For $L\to\infty$, the truncation error satisfies $\lim_{L\to\infty}\|\Delta Z(L)||=0$ \citep{cameron47orthogonal, ernst12convergence}. 

\begin{defn}[Exact PCE representation] \label{def:ExactPCE}
	The PCE of a random variable $Z \in \lsp(\R^\dimz)$ is said to be exact with dimension $L\in\N^\infty$ if $ Z -  \sum_{j=0}^{L-1}\pce{z}^j \phi^j=0$.
\end{defn}

\begin{rem}[Generic affine PCE series ] 
	Given an $\mcl{L}^2$ random variable with known distribution, the key to construct an exact finite-dimensional PCE is the appropriate choice of basis functions. For some widely used distributions, the appropriate choice of polynomial bases is summarized in \citet{xiu02wiener}. Additionally, a generic (non-orthonormal but orthogonal) basis choice for any random variable $Z \in \lsp(\R)$ is $\phi^0 = 1$ and $\phi^1 = Z-\mean[Z]$, which implies the exact and finite PCE $\pce{z}^0 = \mean[Z]$ and $\pce{z}^1 = 1$.
\end{rem}

Replacing all random variables in~\eqref{eq:Dyn} with corresponding PCE representations with respect to the basis $\{\phi^j\}_{j=0}^{\infty}$, we get
	\begin{equation*}
		\sum\nolimits_{j=0}^\infty \pce{x}_{k+1}^j \phi^j= \sum\nolimits_{j=0}^\infty (A\pce{x}_k^j + B\pce{u}_k^j + E\pce{w}_k^j)\phi^j.
	\end{equation*}
Then for all $j\in\N^\infty$, performing Galerkin projection onto the basis functions $\phi^j$, one obtains the dynamics of the PCE coefficients
\begin{subequations} \label{eq:DynPCE}
	\begin{align}	
		\pce{x}_{k+1}^j &= A\pce{x}_k^j + B\pce{u}_k^j + E\pce{w}_k^j, \quad \pce{x}_0^j=\tilde{\pce{x}}_0^j,\\
		\pce{y}_k^j &= C\pce{x}_k^j.
	\end{align}
\end{subequations}
Note that dynamics~\eqref{eq:Dyn}, \eqref{eq:DynReal} and \eqref{eq:DynPCE} have the identical system matrices as the Galerkin projection preserves the linearity of the system. For details about Galerkin projection we refer to Appendix~A of \citet{pan23stochastic}.

\subsection{Stochastic Fundamental Lemma}
Data-driven system representation based on \citet{willems05note} requires persistently exciting input data $\tra{u}{0}{T-1}\coloneqq [u_0^\top,u_1^\top,...,u_{T-1}^\top]^\top$.
\begin{defn}[Persistency of excitation] Let $T$, $N \in \N^+$. A sequence of real-valued data $\tra{u}{0}{T-1}$ is said to be persistently exciting of order $N$ if the Hankel matrix
	\begin{equation*}
		\Hankel_N(\tra{u}{0}{T-1}) \doteq \begin{bmatrix}
			u_0   &\cdots& u_{T-N} \\
			\vdots & \ddots & \vdots \\
			u_{N-1}& \cdots  & u_{T-1} \\
		\end{bmatrix}
	\end{equation*}
	is of full row rank.
\end{defn}

As we are interested in the stochastic LTI system~\eqref{eq:Dyn}, one could attempt using a persistently exciting input sequence $\tra{U}{0}{T-1}$ of $\mcl{L}^2$ random variables to represent the stochastic LTI system \eqref{eq:Dyn}.
We have shown in \citet{pan23stochastic} that input-output-disturbance realization trajectories $\trad{(u,y,w)}{0}{T-1}$, where the superscript $\cdot^{\da}$ denotes offline recorded data of the disturbed system~\eqref{eq:DynReal}, suffices to represent \eqref{eq:Dyn}.

\begin{lem}[Stochastic fundamental lemma]\label{lem:StochFundam}~
    Consider system~\eqref{eq:Dyn} and a $T$-length realization trajectory tuple $\trad{(u,y,w)}{0}{T-1}$ of realization dynamics~\eqref{eq:DynReal}. We assume that system~\eqref{eq:DynReal} is controllable and let $\trad{(u,w)}{0}{T-1}$ be persistently exciting of order $N+\dimx$. Then $ \tra{(U,Y,W)}{0}{N-1}$ is a trajectory of \eqref{eq:Dyn} of length $N$ if and only if there exists $G \in \splx{T-N+1} $ such that 
	\begin{equation*}
		\Hankel_N(\trad{z}{0}{T-1}) G=Z_{[0,N-1]}
	\end{equation*} 
	holds for all $(z, Z)\in \{ (u, U), (y, Y), (w, W)\}$.
\end{lem}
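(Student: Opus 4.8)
The plan is to derive the statement from the deterministic fundamental lemma of \citet{willems05note}, applied to the realization dynamics~\eqref{eq:DynReal} with the disturbance treated as an extra input, and then to lift the resulting realization-wise characterization to $\lsp$ random variables via a measurable selection argument built on the Moore--Penrose pseudoinverse. Throughout I read ``$\tra{(U,Y,W)}{0}{N-1}$ is a trajectory of~\eqref{eq:Dyn} of length $N$'' as: there exist $X_k\in\lsp(\R^\dimx)$, $k=0,\dots,N-1$, for which~\eqref{eq:Dyn} holds.

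First I would reduce to the deterministic setting. System~\eqref{eq:DynReal} is a deterministic LTI system with combined input $v_k\doteq[u_k^\top,w_k^\top]^\top$, input matrix $[B~E]$, and output matrix $C$, and controllability of $(A,B)$ implies controllability of $(A,[B~E])$. Since $\trad{(u,w)}{0}{T-1}$ is persistently exciting of order $N+\dimx$, the deterministic fundamental lemma gives that the column space of $M\doteq[\Hankel_N(\trad{u}{0}{T-1})^\top,\ \Hankel_N(\trad{w}{0}{T-1})^\top,\ \Hankel_N(\trad{y}{0}{T-1})^\top]^\top$ equals the set of all length-$N$ input-disturbance-output trajectories of~\eqref{eq:DynReal}. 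I would also fix an arbitrary initial state for the recorded data, generating an associated state sequence $\trad{x}{0}{T-1}$, and note that by linearity of~\eqref{eq:DynReal} any real linear combination of the columns of $[\Hankel_N(\trad{u}{0}{T-1})^\top,\ \Hankel_N(\trad{x}{0}{T-1})^\top,\ \Hankel_N(\trad{w}{0}{T-1})^\top,\ \Hankel_N(\trad{y}{0}{T-1})^\top]^\top$ is again a state-input-disturbance-output trajectory of~\eqref{eq:DynReal}.

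For the ``only if'' direction: if $\tra{(U,Y,W)}{0}{N-1}$ is a length-$N$ trajectory of~\eqref{eq:Dyn}, then for $\mu$-almost every $\omega$ the realization $(u,y,w)_{[0,N-1]}$ satisfies~\eqref{eq:DynReal} and hence lies in $\mathrm{colsp}(M)$, so the $\lsp$ random vector $R\doteq[U_{[0,N-1]}^\top,\ W_{[0,N-1]}^\top,\ Y_{[0,N-1]}^\top]^\top$ lies in $\mathrm{colsp}(M)$ $\mu$-a.s.\ and therefore $MM^\dagger R=R$ $\mu$-a.s. I would take $G\doteq M^\dagger R$; this is in $\splx{T-N+1}$, being a fixed matrix applied to an $\lsp$ random vector, and satisfies $MG=R$ $\mu$-a.s., which are exactly the three asserted Hankel identities. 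For the ``if'' direction, given any $G\in\splx{T-N+1}$ solving these identities, for $\mu$-a.e.\ $\omega$ one has $R(\omega)\in\mathrm{colsp}(M)$, so $(u,y,w)_{[0,N-1]}$ is a length-$N$ trajectory of~\eqref{eq:DynReal}; I would then set $X_{[0,N-1]}\doteq\Hankel_N(\trad{x}{0}{T-1})G$, which by the linearity remark above is a compatible state sequence and lies in $\lsp$, so that $\tra{(X,U,Y,W)}{0}{N-1}$ satisfies~\eqref{eq:Dyn}.

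The step I expect to be the main obstacle is the measurable selection of $G$ in the ``only if'' direction: since $M$ generally has a nontrivial kernel, solving $MG=R$ realization by realization need not produce a measurable, let alone square-integrable, map $\omega\mapsto G(\omega)$. The pseudoinverse construction circumvents this, at the cost of the elementary but essential observation that ``$R(\omega)\in\mathrm{colsp}(M)$ for $\mu$-a.e.\ $\omega$'' upgrades to the $\mu$-almost-sure identity $MM^\dagger R=R$. The remaining ingredients---transfer of controllability to $(A,[B~E])$, the column-space characterization with an adjoined state block, and square-integrability of linear images of $\lsp$ vectors---I expect to be routine.
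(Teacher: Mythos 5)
The paper does not actually prove Lemma~\ref{lem:StochFundam} here---it is recalled from \citet{pan23stochastic}---but your argument is correct and is essentially the proof used there: apply Willems' deterministic lemma to the realization dynamics~\eqref{eq:DynReal} with $(u,w)$ as the combined input (controllability of $(A,[B~E])$ following from that of $(A,B)$), then lift realization-wise to $\lsp$ random variables, with $G=M^\dagger R$ correctly resolving the measurability/square-integrability of the selection. The one phrase to repair is ``fix an arbitrary initial state for the recorded data'': the state sequence $\trad{x}{0}{T-1}$ must be one \emph{consistent} with the recorded outputs (it exists precisely because the tuple is assumed to be a trajectory of~\eqref{eq:DynReal}), since an arbitrary initial state would not reproduce $\trad{y}{0}{T-1}$.
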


\begin{cor}\label{coro:StochFundamPCE}
	Let the conditions of Lemma~\ref{lem:StochFundam} hold. Then $ \tra{(\pce{y}^j, \pce{u}^j, \pce{w}^j)}{0}{N-1}$, $j \in \N^\infty$ is a trajectory of the dynamics of PCE coefficients \eqref{eq:DynPCE} if and only if there exists $\pce{g}^j \in \R^{T-N+1}$ such that 
	\begin{equation*}
		\Hankel_t(\trad{z}{0}{T-1}) \pce{g}^j =\pce{z}^j_{[0,N-1]},~j\in \N^\infty
	\end{equation*} 
	holds for all $(z, \pce{z})\in \{(u,\pce{u}), (y,\pce{y}), (w, \pce{w})\}$.
\end{cor}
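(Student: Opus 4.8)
The plan is to obtain Corollary~\ref{coro:StochFundamPCE} as a direct consequence of Lemma~\ref{lem:StochFundam} by projecting the random-variable identity onto a fixed basis function $\phi^j$. The key observation is that the Hankel matrix $\Hankel_N(\trad{z}{0}{T-1})$ is built purely from the offline recorded \emph{realization} data, which is deterministic; hence it acts on PCE coefficients componentwise in exactly the same way it acts on random variables, and the whole argument reduces to taking inner products with $\phi^j$.

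First I would invoke Lemma~\ref{lem:StochFundam}: under the stated hypotheses, $\tra{(U,Y,W)}{0}{N-1}$ is a trajectory of~\eqref{eq:Dyn} if and only if there is $G\in\splx{T-N+1}$ with $\Hankel_N(\trad{z}{0}{T-1})\,G = Z_{[0,N-1]}$ for all three pairs $(z,Z)$. Next I would recall that, by the Galerkin projection, $\tra{(\pce{y}^j,\pce{u}^j,\pce{w}^j)}{0}{N-1}$ is a trajectory of~\eqref{eq:DynPCE} if and only if the corresponding random variables $\tra{(U,Y,W)}{0}{N-1}$ form a trajectory of~\eqref{eq:Dyn} (the PCE coefficients are exactly the $\phi^j$-modes of a single random-variable trajectory, and~\eqref{eq:DynPCE} is the modewise image of~\eqref{eq:Dyn}). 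So it suffices to show that the random-variable Hankel condition of Lemma~\ref{lem:StochFundam} is equivalent to the coefficient condition $\Hankel_N(\trad{z}{0}{T-1})\,\pce{g}^j = \pce{z}^j_{[0,N-1]}$ for all $j\in\N^\infty$.

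For the forward direction, given $G$, I would set $\pce{g}^j \doteq \langle G,\phi^j\rangle / \langle\phi^j\rangle^2 \in \R^{T-N+1}$, the $j$-th PCE coefficient vector of $G$. Applying the bounded linear functional $\langle\,\cdot\,,\phi^j\rangle/\langle\phi^j\rangle^2$ to both sides of $\Hankel_N(\trad{z}{0}{T-1})\,G = Z_{[0,N-1]}$ and using that the deterministic matrix $\Hankel_N(\trad{z}{0}{T-1})$ commutes with this functional gives $\Hankel_N(\trad{z}{0}{T-1})\,\pce{g}^j = \pce{z}^j_{[0,N-1]}$ for every $j$. For the converse, given coefficient vectors $\pce{g}^j$ satisfying the identity for all $j$, I would reconstruct $G \doteq \sum_{j=0}^{\infty}\pce{g}^j\phi^j \in \splx{T-N+1}$; summing the projected identities against $\phi^j$ recovers $\Hankel_N(\trad{z}{0}{T-1})\,G = Z_{[0,N-1]}$, where $Z_{[0,N-1]} = \sum_j \pce{z}^j_{[0,N-1]}\phi^j$ is precisely the random-variable trajectory whose modes are the given $\pce{z}^j$. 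Then Lemma~\ref{lem:StochFundam} closes the loop.

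I do not anticipate a serious obstacle here; the statement is essentially a linear-algebraic restatement of Lemma~\ref{lem:StochFundam} in coordinates. The only points needing care are bookkeeping ones: ensuring that the matrix in the corollary is indeed the same $\Hankel_N$ (the subscript $t$ in the displayed equation should read $N$), confirming that $\pce{g}^j$ lies in the finite-dimensional space $\R^{T-N+1}$ rather than in an $\lsp$ space (which follows because $G$'s components are scalars and PCE is applied componentwise), and, if one works with a truncated basis, checking that the equivalence between trajectories of~\eqref{eq:Dyn} and~\eqref{eq:DynPCE} is stated for exact PCE representations so that no truncation error enters. None of these require more than the material already developed in the preliminaries.
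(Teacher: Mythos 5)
Your argument is sound in substance, but it takes a different route from the one the paper (and the cited source \citet{pan23stochastic}) intends. You derive the corollary \emph{from} Lemma~\ref{lem:StochFundam} by Galerkin-projecting the random-variable identity onto each $\phi^j$; the paper instead obtains it directly from the \emph{deterministic} fundamental lemma of \citet{willems05note}, applied separately for each $j$: since the realization dynamics~\eqref{eq:DynReal} and the PCE-coefficient dynamics~\eqref{eq:DynPCE} share the same matrices $A,B,C,E$ (Galerkin projection preserves linearity), the offline data $\trad{(u,y,w)}{0}{T-1}$ is a trajectory of exactly the system that the coefficients $\tra{(\pce{u}^j,\pce{w}^j,\pce{y}^j)}{0}{N-1}$ obey, and persistency of excitation of $\trad{(u,w)}{0}{T-1}$ of order $N+\dimx$ then gives existence of $\pce{g}^j\in\R^{T-N+1}$ for each $j$ independently (this is the ``column-space equivalence'' the paper invokes via Lemma~3 of \citet{pan23stochastic}). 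The per-$j$ route is more elementary and, importantly, involves no infinite sums; each direction of the equivalence is a finite-dimensional statement.

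The one genuine soft spot in your version is the converse: you reconstruct $G \doteq \sum_{j=0}^{\infty}\pce{g}^j\phi^j$ and claim $G\in\splx{T-N+1}$, but arbitrary vectors $\pce{g}^j$ satisfying $\Hankel_N(\trad{z}{0}{T-1})\,\pce{g}^j=\pce{z}^j_{[0,N-1]}$ need not be square-summable against the basis (each identity only pins $\pce{g}^j$ down modulo the kernel of the stacked Hankel matrix, and a bad choice of the kernel component can destroy $\lsp$ convergence). To make your reconstruction rigorous you would have to fix, say, the least-norm solution so that $\|\pce{g}^j\|$ is controlled by $\|\pce{z}^j_{[0,N-1]}\|$. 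Similarly, your forward direction silently assumes the coefficient families assemble into $\lsp$ random variables so that Lemma~\ref{lem:StochFundam} applies. Both issues evaporate if you argue coefficient-by-coefficient with the deterministic lemma, since the corollary's statement never actually requires passing back to random variables. Your bookkeeping remarks (the subscript $t$ should read $N$; $\pce{g}^j$ is deterministic) are correct.
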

In the image representation of a fundamental lemma, the future trajectory of an LTI system lives in the column space the Hankel matrices consisting of the past trajectory data. In \citet{pan23stochastic}, however, we have shown that Hankel matrices in random variables cause conceptual issues and the corresponding fundamental lemma does not necessarily hold. Instead, one can predict the future trajectory of the stochastic LTI system~\eqref{eq:Dyn} from the data of the system in realizations~\eqref{eq:DynReal}.
\section{Forward uncertainty propagation with reduced data requirements} \label{sec:MainResults}
As the realizations of process disturbances are in general difficult to obtain, we aim for a variant of Lemma~\ref{lem:StochFundam} that does not explicitly rely on past disturbance data for the online computation. To this end, we will exploit the superposition principle and causality requirements of stochastic systems.

\subsection{VARX model}
To avoid the use of state measurements, we switch to the Vector AutoRegressive with eXogenous input (VARX) model structure
\begin{multline} \label{eq:VARXEW}
	Y_{k} = \hat{A}\tra{Y}{k-\tini}{k-1} + \hat{B}\tra{U}{k-\tini}{k-1}+\\
	 \hat{E}\tra{W}{k-\tini}{k-1}.
\end{multline}
\citet{sadamoto23equivalence} shows the equivalence between the above VARX model and the state-space model~\eqref{eq:Dyn} with an observability assumption. Moreover, \citet{sadamoto23equivalence} explicitly derives the relation between matrices $A$, $B$, $E$ of \eqref{eq:Dyn} and $\hat{A}$, $\hat{B}$, $\hat{E}$ of \eqref{eq:VARXEW} as
\begin{align*}
	\hat{A} &= CA^{\tini}\mcl{O}^{\dagger},\\
	\hat{B} &= C\left(\mcl{C}(B)-A^{\tini}\mcl{O}^{\dagger}\mcl{M}(B) \right),\\
	\hat{E} &= C\left(\mcl{C}(E)-A^{\tini}\mcl{O}^{\dagger}\mcl{M}(E) \right),
\end{align*}
where $\cdot^\dagger$ denotes the Moore-Penrose inverse. The controllability matrix $\mcl{C}$ and the matrix $\mcl{M}$ are defined as $\mcl{C}(D) \coloneqq \begin{bmatrix} A^{\tini-1}D & \cdots & AD & D\end{bmatrix}$ and $\mcl{M}(D) \coloneqq \begin{bmatrix} 0 \\ CD & \ddots \\ \vdots & \ddots & &\ddots \\ CA^{\tini-2}D & \cdots & &CD & 0 \end{bmatrix}$ for $D\in\{B,E\}$.
Note that the mapping between matrices $A$, $B$, $E$ and $\hat{A}$, $\hat{B}$, $\hat{E}$ is not unique, e.g., \citet{phan96relationship} provide another possibility. The equivalence between the state-space model and the VARX model suggests that Lemma~\ref{lem:StochFundam} and Corollary~\ref{coro:StochFundamPCE} also hold for the VARX model \citep{pan25data}.

It is common in behavioral systems theory to avoid assumptions on the structure of state-space and VARX models as much as possible. Hence, we model the effect of the process disturbance on the dynamics at each time step via a single aggregated term: $W_k$ in state-space models $X_{k+1}=AX_k+BU_k+EW_k$, where $E$ is often assumed to be an identity matrix, and $V_{k-1}$ in VARX models.
That is, the VARX model is reformulated as
\begin{subequations} \label{eq:VARX}
	\begin{align}
		&Y_k = \hat{A}\tra{Y}{k-\tini}{k-1} + \hat{B}\tra{U}{k-\tini}{k-1} + V_{k-1},\\
		&\tra{(U,Y)}{1-\tini}{0} = \tra{(\tilde{U},\tilde{Y})}{1-\tini}{0},
	\end{align}
\end{subequations}
where $V_{k-1}\coloneqq\hat{E}\tra{W}{k-\tini}{k-1}$, $\tra{(\tilde{U},\tilde{Y})}{1-\tini}{0}$ is an initial input-output trajectory of length $\tini$. For simplicity and ease of notation, we assume that the disturbances $V_k$, $k\in\N^\infty$ are i.i.d. random variables.
This is similar to the widely accepted assumption in state-space models that disturbances are i.i.d. We will discuss its generalization as well as stochastic LTI systems with measurement noise in Section~\ref{sec:Extension}. 
\begin{assum}[I.i.d. disturbances] \label{ass:iid}
	The disturbances~$V_k$, $k\in\N^\infty$ in VARX model~\eqref{eq:VARX} are i.i.d. random variables. Moreover, the distribution is known.
\end{assum}

Similar to~\eqref{eq:DynPCE}, we obtain the corresponding dynamics of realizations
\begin{subequations} \label{eq:VARXReal}
	\begin{align}
		&y_k = \hat{A}\tra{y}{k-\tini}{k-1} + \hat{B}\tra{u}{k-\tini}{k-1} + v_{k-1},\\
		&\tra{(u,y)}{1-\tini}{0} = \tra{(\tilde{u},\tilde{y})}{1-\tini}{0},
	\end{align}
\end{subequations}
and the dynamics of PCE coefficients for~\eqref{eq:VARX} for all $j\in\I_{[0,L-1]}$, $k\in\I_{[1,N]}$
\begin{subequations} \label{eq:VARXPCE}
	\begin{align}
		&\pce{y}_k^j = \hat{A}\tra{\pce{y}^j}{k-\tini}{k-1} + \hat{B}\tra{\pce{u}^j}{k-\tini}{k-1} + \pce{v}^j_{k-1},\\
		&\tra{(\pce{u},\pce{y})^j}{1-\tini}{0} = \tra{(\tilde{\pce{u}},\tilde{\pce{y}})^j}{1-\tini}{0}.
	\end{align}
\end{subequations}
Especially, given a deterministic initial trajectory $\tra{(\tilde{u},\tilde{y})}{1-\tini}{0}$, we get $\mean[\tra{(\tilde{u},\tilde{y})}{1-\tini}{0}]=\tra{(\tilde{u},\tilde{y})}{1-\tini}{0}$ and thus its PCE representation reads
\begin{align*}
	\tra{(\tilde{\pce{u}},\tilde{\pce{y}})^j}{1-\tini}{0} = \begin{cases}
		\tra{(\tilde{u},\tilde{y})}{1-\tini}{0},\quad &\text{for } j=0\\
		0,&\text{otherwise}
	\end{cases}.
\end{align*}
We also note that the dynamics of realizations and of PCE coefficients, i.e.~\eqref{eq:VARXReal} and~\eqref{eq:VARXPCE}, share the same system matrices. Therefore, the Hankel matrices consisting of realizations and PCE coefficients span the same column space provided persistency of excitation holds. This way, one can use the Hankel matrices in realizations to predict the future trajectories of PCE coefficients, see the column-space equivalence given by Lemma~3 by~\citet{pan23stochastic} for details.

\subsection{System decomposition}
For the sake of simplified notation, we consider deterministic initial input-output condition $\tra{(U,Y)}{1-\tini}{0}=\tra{(\tilde{u},\tilde{y})}{1-\tini}{0}$. We discuss the generalization to an uncertain initial condition in Remark~\ref{rem:UncertainIni}.
	
According to the superposition principle, we decouple the stochastic system~\eqref{eq:VARX} into subsystems based on the different sources of uncertainties. As an independent disturbance $V_k$ enters the system~\eqref{eq:VARX} at each time step $k\in\N$ and affects the future inputs and outputs, we obtain the decomposition
\begin{subequations} \label{eq:Superposition}
\begin{equation}
	Z_k = \mean[Z_k] + \sum_{i=0}^{k-1} Z_k^{v_i},\quad Z\in\{U,Y\},
\end{equation}
where $\mean[Z_k]$ is the nominal part and $Z_k^{v_i}$ denotes the effect of disturbance $V_i$ on $Z_k$. Therefore, the dynamics of the nominal deterministic subsystem are
\begin{equation} \label{eq:SubNom}
	\begin{split}
		&\mean[Y_k] = \hat{A}\mean[\tra{Y}{k-\tini}{k-1}] + \hat{B}\mean[\tra{U}{k-\tini}{k-1}] +\mean[V],\\
		&\mean[\tra{(U,Y)}{1-\tini}{0}] = \tra{(\tilde{u},\tilde{y})}{1-\tini}{0},
	\end{split}
\end{equation}
where $\mean[V]$ is the expected value for i.i.d. disturbances. The stochastic error system for each $V_i$, $i\in\N$ reads
\begin{equation} \label{eq:SubError}
	\begin{split}
		&Y^{v_i}_k = \hat{A}\tra{Y^{v_i}}{k-\tini}{k-1} + \hat{B}\tra{U^{v_i}}{k-\tini}{k-1},\\
		&\tra{(U,Y)^{v_i}}{1-\tini}{i} = 0,\quad Y^{v_i}_{i+1} = V_i-\mean[V]
	\end{split}
\end{equation}
\end{subequations}
with $\mean[Y_k^{v_i}]=\mean[U_k^{v_i}]=0$, $k\in\N$.
Note that due to the inherent causality requirement, i.e., the input $U_k$ and output $Y_k$ may only depend on past disturbances $V_s$, $s<k$, we have $\tra{(U,Y)^{v_i}}{1-\tini}{i} = 0$, $i\in\N$. Thus,
\[
	Y^{v_i}_{i+1} = \hat{A}\cdot 0+ \hat{B}\cdot 0 + V_i-\mean[V]=V_i-\mean[V].
\]
Splitting the LTI system~\eqref{eq:VARX} into decoupled subsystems~\eqref{eq:Superposition}, the additive disturbance~$V_i$, $i\in\N$ is converted into the initial output condition $Y_{i+1}^{v_i}=V_i-\mean[V]$. That is, the subsystems~\eqref{eq:Superposition} are disturbance free except for $\mean[V]$ in~\eqref{eq:SubNom}. A crucial insight of the above considerations is that Lemma~\ref{lem:StochFundam} and Corollary~\ref{coro:StochFundamPCE} can be applied to predict the future trajectories of~\eqref{eq:Superposition} without the need for Hankel matrices involving disturbance data.

\subsection{PCE basis structure} \label{sec:Structure}
To obtain a PCE reformulation of subsystems~\eqref{eq:Superposition} for further analysis, one needs to construct a joint PCE basis applicable to the inputs and outputs over the whole prediction horizon, i.e. $U_k$ and $Y_k$ for $k\in\I_{[1,N]}$.

Let $\Psi^{v}_k\coloneqq \{ \psi^n(\xi_k)\}_{n=0}^{L_v-1}$, $L_v\in\N^\infty$ be the finite-dimensional basis for $V_k$, $k\in\I_{[0,N-1]}$. Then, the i.i.d. disturbances $V_{k}$, $k\in \I_{[0,N-1]}$ admit exact PCEs 
\begin{equation} \label{eq:WkPCE}
		V_k = \sum_{n=0}^{L_v-1} \pce{v}^n \psi^n(\xi_k).
\end{equation}
As the disturbances are identically distributed, the PCEs of $V_k$, $k\in\I_{[0,N-1]}$ have the same algebraic structure of the basis functions $\psi^n$ and coefficients $\pce{v}^n$, $n\in\I_{[0,L_v-1]}$. The independence of $V_k$, $k\in\I_{[0,N-1]}$ is modelled by the use of different stochastic arguments $\xi_k$  in the PCE basis functions.
We construct the joint disturbance basis $\Phi\coloneqq \cup_{k=0}^{N-1} \Psi^{v}_k$ as
\begin{equation}\label{eq:BasisElement}
	\begin{split} 
		\Phi = &\Big\{ 1, \underbrace{\psi^1(\xi_0),...,\psi^{L_v-1}(\xi_0)}_{\Phi^{v_0}\setminus \{\psi^0(\xi_0)\}},
		\\ &\qquad...,\underbrace{\psi^1(\xi_{N-1}),...,\psi^{L_v-1}(\xi_{N-1})}_{\Phi^{v_{N-1}}\setminus \{\psi^0(\xi_{N-1})\}}\Big\},
	\end{split}
\end{equation}
which contains a total of $L=1+N(L_v-1)$ terms. We also enumerate the joint basis $\Phi$ from 0 to $L-1$ according to the sequence given above. A detailed description of the basis structure is omitted here for brevity. It can be found, e.g., in Section~2.3 of \citep{ou25polynomial}.

Recall that the PCEs of i.i.d. $V_k$, $k\in\I_{[0,N-1]}$ in the joint basis~$\Phi$ be $V_k=\sum_{j=0}^{L-1}\pce{v}_k^j\phi^j$. From the structure of~\eqref{eq:BasisElement} we observe that the basis functions related to the disturbance $V_k$, $k\in\I_{[0,N-1]}$ are $\phi^j$, $j\in \{0\}\cup\ik{k}$ with
\[
	\ik{k}\coloneqq \I_{[1+k(L_v-1),(k+1)(L_v-1)]}.
\] 
Conversely, given the PCE dimension $j$, using
\begin{equation*} \label{eq:k}
	k^\prime(j) \coloneqq \begin{cases}
		0,~&\text{for } j=0,\\
		k \text{ such that } j\in\ik{k}, \quad &\text{for } j\in\I_{[1,L-1]}
	\end{cases},
\end{equation*}
we see that the PCE coefficients $(\pce{u},\pce{y})^j$, $j\in\I_{[1,L-1]}$  relate to the disturbance $V_{k^\prime(j)}$. Note that $(\pce{u},\pce{y})^0$ is the expected value and thus it is linked to system~\eqref{eq:SubNom}.

With the index set $\ik{k}$,  $k\in\I_{[0,N-1]}$, we can identify the PCE coefficients corresponding to the disturbance $V_k$ in the joint basis $\Phi$
\begin{equation*}
	\pce{v}_k^j = \begin{cases}
		\pce{v}^0=\mean[V],~&\text{for } j=0\\
		\pce{v}^{j-k(L_v-1)},~&\text{for } j\in\ik{k}\\
		0,&\text{otherwise}
	\end{cases},
\end{equation*}
where $\pce{v}^{n}$, $n\in\I_{[0,L_v-1]}$ are defined in~\eqref{eq:WkPCE}. Moreover,  \citep[Proposition~1]{pan23stochastic} shows that all $Z_k$, $Z\in\{U,Y\}$ of~\eqref{eq:VARX} admit exact PCEs in this joint basis $\Phi$ for all $k \in \I_{[0,N]}$ over the entire horizon $N$. Therein the basis $\Phi$ entails the same basis directions but is indexed differently.

\subsection{Causality in PCE}
Given the joint basis $\Phi$, the PCEs of $\mean[Z]$ and $Z^{v_i}$, $Z\in\{U,Y\}$ with non-zero coefficients read
\begin{equation} \label{eq:RVDecompositionPCE}
	\mean[Z_k] = \pce{z}_k^0\phi^0,\quad Z_k^{v_i} = \sum_{j\in\ik{i}}\pce{z}_k^j\phi^j.
\end{equation}
Expressing the causality conditions $\tra{(U,Y)^{v_i}}{1-\tini}{i} = 0$ and $Y^{v_i}_{i+1} = V_i-\mean[V]$ in the PCE framework, we have
\begin{equation} \label{eq:Causality}
	\tra{(\pce{u},\pce{y})^j}{1-\tini}{k^{\prime}(j)}=0,~ \pce{y}_{k^{\prime}(j)+1}^j = \pce{v}^{I(j)}
\end{equation}
for all $j\in\I_{[1,L-1]}$, where $I(j)\coloneqq j-k^{\prime}(j)(L_v-1)$ \citep{ou25polynomial}. This way, the PCE coefficients of disturbances in~\eqref{eq:DynPCE} become the initial value $\pce{y}_{k^{\prime}(j)+1}^j = \pce{v}^{I(j)}$. It is straightforward to rewrite the dynamics of the PCE coefficients for $j=0$,
\begin{subequations} \label{eq:SuperpositionPCE}
	\begin{equation}\label{eq:VARXSimExp}
		\begin{split}
			&\pce{y}_k^0 = \hat{A}\tra{\pce{y}^0}{k-\tini}{k-1} + \hat{B}\tra{\pce{u}^0}{k-\tini}{k-1}+\mean[V],\\
			&\tra{(\pce{u},\pce{y})^0}{1-\tini}{0} = \tra{(\tilde{u},\tilde{y})}{1-\tini}{0},
		\end{split}
	\end{equation}
which corresponds to the nominal deterministic system~\eqref{eq:SubNom}. For $j\in\ik{i}$, $i\in\I_{[0,N-1]}$, we obtain
	\begin{equation} \label{eq:VARXSimDist}
			\pce{y}_k^j = \hat{A}\tra{\pce{y}^j}{k-\tini}{k-1} {+}\hat{B}\tra{\pce{u}^j}{k-\tini}{k-1}~\text{with}~\eqref{eq:Causality},
	\end{equation}
\end{subequations}
which corresponds to the stochastic error system~\eqref{eq:SubError} for $V_i$.
Thus, one can apply the fundamental lemma, cf.   \citep[Theorem~1]{willems05note}, to predict the future trajectory~$\tra{(\pce{u},\pce{y})}{1}{N}$ for all $j\in\I_{[0,L-1]}$.

\subsection{Propagation without past disturbance data in Hankel matrices}
We move on to show how one can forward propagate the PCE coefficients of LTI system~\eqref{eq:VARXPCE}, in a data-driven fashion without past disturbance data in the online computation. The crucial observation is that the reformulated dynamics~\eqref{eq:SuperpositionPCE} are deterministic systems---disturbed by the constant $\mean[V]$ in case of \eqref{eq:VARXSimExp} and undisturbed in case of \eqref{eq:VARXSimDist}.
Hence, we require a recording of an input-output trajectory $\traud{(u,y)}{1}{T}$ of the undisturbed system
\begin{subequations} \label{eq:VARXFree}
	\begin{align}
		&y_k^{\ud} = \hat{A}\traud{y}{k-\tini}{k-1} + \hat{B}\traud{u}{k-\tini}{k-1}, \label{eq:VARXDyn}\\
		&\traud{(u,y)}{1-\tini}{0} = \tra{(\tilde{u},\tilde{y})}{1-\tini}{0} \label{eq:VARXFreeIni}
	\end{align}
\end{subequations}
in the offline phase, where the superscript $\cdot^{\ud}$ denotes the undisturbed system and data. It is worth mentioning that the input ${u}^{\ud}$ of~\eqref{eq:VARXFree} is indeed identical to that of the disturbed system~\eqref{eq:VARXReal}. The superscript $\cdot^{\ud}$ of the input is used to maintain consistent notation with the undisturbed output $y^{\ud}$.

To obtain the data of~\eqref{eq:VARXFree}, we record the data of system~\eqref{eq:VARX} when it is temporarily undisturbed, i.e. $v=0$. However, for any continuously distributed disturbance, the disturbance realization is almost surely not identically zero over the past time instants. Therefore, the disturbance realizations in the past still affect the current and the future outputs.
Consider disturbances that act on system~\eqref{eq:VARXFree} at time step $k<0$, i.e. $v_k\neq0$ for $k<0$. Then we decompose the output as $y^{\ud} = \bar{y} + y^v$ with
	\begin{alignat*}{2} 
		&\bar{y}_k = \hat{A}\tra{\bar{y}}{k-\tini}{k-1} + \hat{B}\traud{u}{k-\tini}{k-1},\\
		&\tra{\bar{y}}{1-\tini}{0} = \tra{(\tilde{y}-y^v)}{1-\tini}{0},\\
		&y^v_k = \hat{A}\tra{y^v}{k-\tini}{k-1} + \hat{B}\cdot 0 + v_{k-1},\quad &&k\leq0\\
		& y^v_k = \hat{A}\tra{y^v}{k-\tini}{k-1} + \hat{B}\cdot 0 + 0,  && k\geq 1,
	\end{alignat*}
where $\bar{y}$ denotes the part of output unaffected by disturbances, and $y^v$ denotes the part affected by disturbances. Given a persistently exciting input sequence $\traud{u}{1}{T}$, the condition
\[\rank\left( \begin{bmatrix} 
	\Hankel_{\tini}(\traud{u}{1}{T})\\ 
	\Hankel_{\tini}(\tra{\bar{y}}{1}{T}) 
\end{bmatrix} \right) = \tini\cdot n_u + n_x\]
holds \citep{willems05note}. From the dynamics $y^v_k = \hat{A}\tra{y^v}{k-\tini}{k-1} + v_{k-1}$ for $k\leq 0$, we see that the elements of the vector $\tra{y^v}{1-\tini}{0}$ are affected by the randomly sampled realizations of disturbances $V_{[-\tini,-1]}$. It is then straightforward to show that $\rank\left(\Hankel_{\tini}(\tra{y^v}{1}{T})\right) = \tini\cdot\dimy$ holds under the mild condition $\rank(\hat{A}_1)=\dimy$, where $\hat{A}_1\in\R^{\dimy\times\dimy}$ is the first block of  $\hat{A}=[\hat{A}_1~|~\hat{A}_2~|~\cdots~|~\hat{A}_{\tini}]$. Thus, from $\Hankel_{\tini}(\traud{y}{1}{T}) = \Hankel_{\tini}\big(\tra{\bar{y}}{1}{T}) +  \tra{y^v}{1}{T}\big)$ we have the following rank condition
\begin{align*}
	&\rank \left( \begin{bmatrix}
		\Hankel_{\tini}(\traud{u}{1}{T})\\
		\Hankel_{\tini}(\traud{y}{1}{T})
	\end{bmatrix}\right) 
	=\rank\left( \begin{bmatrix*}[l]
		\Hankel_{\tini}(\traud{u}{1}{T})\\ 
		\Hankel_{\tini}(\tra{y^v}{1}{T})
	\end{bmatrix*} \right)\\
	=& \tini\cdot(\dimu+\dimy).
\end{align*}
Hence, the inclusion follows as
\[
\mathrm{colsp}\left( \begin{bmatrix} \Hankel_{\tini}(\traud{u}{1}{T})\\ \Hankel_{\tini}(\tra{\bar{y}}{1}{T}) \end{bmatrix}\right)\subseteq \mathrm{colsp}\left( \begin{bmatrix} \Hankel_{\tini}(\traud{u}{1}{T})\\ \Hankel_{\tini}(\traud{y}{1}{T}) \end{bmatrix}\right)
\]
since $\dimx\leq\tini\cdot\dimy$. The column spaces in the above inclusion are equivalent when $\dimx=\tini\cdot\dimy$. Therefore, given an arbitrary trajectory $\tra{(\tilde{u},\tilde{y})}{1-\tini}{0}$ of disturbed system~\eqref{eq:VARX}, there exists $g\in\R^{T-\tini+1}$ such that
\[
\begin{bmatrix}
	\Hankel_{\tini}(\traud{u}{1}{T-\tini})\\
	\Hankel_{\tini}(\traud{y}{1}{T-\tini})
\end{bmatrix} g = 
\begin{bmatrix}
	\tra{\tilde{u}}{1-\tini}{0}\\
	\tra{\tilde{y}}{1-\tini}{0} 
\end{bmatrix},
\]
while such a vector $g$ may not exist if the data $\tra{(u^{\ud},\bar{y})}{1}{T}$ is used as the stacked Hankel matrices  $\begin{bmatrix} 
		\Hankel_{\tini}(\traud{u}{1}{T})\\ 
		\Hankel_{\tini}(\tra{\bar{y}}{1}{T}) 
	\end{bmatrix}$ are not of full rank.

\begin{rem}[Obtaining undisturbed system data]
	There are different approaches to obtain the data of system~\eqref{eq:VARXFree}: 
		\begin{enumerate}[label=(\roman*)]
			\item Measure the input-output data of system~\eqref{eq:VARX} during periods when disturbances are absent or negligible. For instance, \citet{hong16probabilistic} suggest that the electricity demand in power systems exhibit reduced variability at midnight when the weather conditions are smooth and human activities are limited. Moreover, by appropriate input design, the effect of disturbances can be reduced such that the resulting input–output data can be treated as approximately noise-free \citep{ljung99system, forssell99time}.
			\item\label{itm:measurement} Deploy additional high-fidelity sensors to measure the realizations of disturbances of system~\eqref{eq:VARX} in the offline phase, e.g., environment temperature for building climate control and user demands in power systems. Then, based on the collected data, one can compute the undisturbed trajectory of~\eqref{eq:VARXFree}, cf.~Section~\ref{sec:Estimation}. This assumption on the availability of additional data, e.g., disturbance or state measurements, appears frequently in recent literature \citep{depersis20formulas, doerfler23on, alanwar23data,wolff24robust,disaro24equivalence}.
			\item Collect the input-output data of the disturbed system~\eqref{eq:VARX} and estimate the corresponding disturbance realizations. \citet{pan23stochastic} propose a least-squares estimator
			    \begin{equation} \label{eq:Estimator}
		    		\Hankel_1(\trad{\hat{v}}{0}{T-1}) = \Hankel_1(\trad{y}{1}{T}-\mean[V])(I-S^\dagger S)
    			\end{equation}
			with $S=\begin{bmatrix} \Hankel_{\tini}(\trad{y}{1-\tini}{T-1}) \\ \Hankel_{\tini}(\trad{u}{1-\tini}{T-1}) \end{bmatrix}$. Then, one can compute the undisturbed trajectory of~\eqref{eq:VARXFree} as in Approach~\ref{itm:measurement}. Additionally, \citet{turan22data} develop a state observer for systems with unknown disturbances.
		\end{enumerate}
\end{rem}
Using the data of~\eqref{eq:VARXFree} to construct Hankel matrices, one may wonder whether the predicted trajectory $\tra{(u,y)}{1}{N}$ is guaranteed to satisfy the  dynamics~\eqref{eq:VARXDyn} since the effects of  past disturbances before data collection persist. Intuitively, the answer is positive as the dynamics~\eqref{eq:VARXDyn} still hold for $\traud{(u,y)}{1}{T}$ during data collection.

It should be noted that one cannot directly apply the original stochastic fundamental lemma, i.e. Lemma~\ref{lem:StochFundam}, to the undisturbed system~\eqref{eq:VARXFree}. In this case, $v=0$ and $\Hankel_N(\trad{v}{0}{T-1})=0$ hold. Therefore, $(u,0)_{[0,T-1]}$ is not persistently exciting. Put differently, in case of undisturbed input-output measurements, our previous result allows only to predict undisturbed future trajectories.

\begin{assum}[Persistently exciting data] \label{ass:PE}
	We assume that the collected data $\traud{(u,y)}{1}{T}$ satisfy
	\[
		\rank\left( \begin{bmatrix*}[l]
			\Hankel_{\tini+N}(\traud{u}{1}{T}) \\ \Hankel_{\tini}(\traud{y}{1}{T-N+1})
		\end{bmatrix*}\right) = (\tini+N)n_u + \tini n_y,
	\]
	i.e., the collected data $\traud{(u,y)}{1}{T}$ are persistently exciting.
\end{assum}
The next result shows how to capture the undisturbed realization dynamics~\eqref{eq:VARXDyn}.
\begin{cor} \label{cor:Prediction}
	 Let Assumption~\ref{ass:Sys}, \ref{ass:iid}, and \ref{ass:PE} hold. Then a sequence $\tra{(u,y)}{1-\tini}{N}$ with an arbitrary initial condition $\tra{(u,y)}{1-\tini}{0}=\tra{(\tilde{u},\tilde{y})}{1-\tini}{0}$ satisfies dynamics~\eqref{eq:VARXFree}
	  if and only if there exists $g\in\R^{T-N-\tini+1}$ such that
\begin{equation} \label{eq:FundaLemma}
	\left[\begin{array}{ll} \Hankel_{\tini+N}(\traud{u}{1}{T})\\ \Hankel_{\tini+N}(\traud{y}{1}{T}) \end{array}\right] g = 
	\left[\begin{array}{ll} \tra{u}{1-\tini}{N} \\ \tra{y}{1-\tini}{N} \end{array}\right].
\end{equation}
\end{cor}
Corollary~\ref{cor:Prediction} is a straight-forward extension of the deterministic fundamental lemma by \citet{willems05note} towards disturbed initial conditions
and the proof follows along the same lines as Theorem~1 of \citep{berberich22linear}. 
Specifically, Corollary~\ref{cor:Prediction} allows predicting the future trajectory $\tra{(u,y)}{1}{N}$ along dynamics~\eqref{eq:VARXDyn} even when the initial trajectory $\tra{(\tilde{u},\tilde{y})}{1-\tini}{0}$ is disturbed, whereas the deterministic fundamental lemma requires an undisturbed initial trajectory.
Then we arrive at our main results, which are provided in Lemma~\ref{lem:j0}, \ref{lem:jOther}, and \ref{lem:PropRV} below.

\begin{lem}[Propagation in expectation ($j=0$)] \label{lem:j0}
	Consider system~\eqref{eq:Dyn} and a trajectory $\traud{(u,y)}{1}{T}$ of~\eqref{eq:VARXFree}. Let Assumptions~\ref{ass:Sys}, \ref{ass:iid}, and \ref{ass:PE} hold. Then given a measured initial condition $\tra{(\pce{u},\pce{y})^0}{1-\tini}{0}$ = $\tra{(\tilde{u},\tilde{y})}{1-\tini}{0}$,
	$\tra{(\pce{u},\pce{y})^0}{1}{N}$ is a trajectory of~\eqref{eq:DynPCE} for $j=0$ if and only if there exists $\pce{g}^0\in\R^{T-N-\tini+1}$ such that
	\begin{subequations} \label{eq:UYj0}
	\begin{equation} \label{eq:TrajUYr}
		\left[\begin{array}{ll} \Hankel_{\tini}(\traud{u}{1}{T-\tini})\\ \Hankel_{\tini}(\traud{y}{1}{T-\tini}) \\ \midrule \Hankel_{N}(\traud{u}{\tini+1}{T}) \\ \Hankel_{N}(\traud{y}{\tini+1}{T})  \end{array}\right] \pce{g}^0 = 
		\left[\begin{array}{ll} \tra{\tilde{u}}{1-\tini}{0} \\ \tra{\tilde{y}}{1-\tini}{0} \\ \midrule \tra{\pce{u}^0}{1}{N} \\ \tra{y^u}{1}{N} \end{array}\right]
	\end{equation}
	holds, where $ y_k^u = \pce{y}_k^0 - \sum_{i=1}^{k}y_i^{v}$, $y_1^v=\mean[V]$ and the system output $\tra{y^v}{2}{N}$ is given by
	\begin{multline} \label{eq:yw}
		\tra{y^v}{2}{N} = \Hankel_{N-1}(\traud{y}{\tini+1}{T})\cdot\\
		\begin{bmatrix*}[l] \Hankel_{\tini+N-1}(\traud{u}{1}{T}) \\ \Hankel_{\tini-1}(\traud{y}{1}{T-N})\\ \Hankel_{1}(\traud{y}{\tini}{T-N+1}) \end{bmatrix*}^\dagger \begin{bmatrix*}[l]  0_{(\tini+N-1)\dimu\times1} \\ 0_{(\tini-1)\dimy\times1} \\ \mean[V] \end{bmatrix*}.
	\end{multline}
	\end{subequations}
\end{lem}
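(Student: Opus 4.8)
The argument mirrors the extension of the fundamental lemma to affine systems (cf.\ Theorem~1 by \citet{berberich22linear}, as invoked already in Lemma~\ref{lem:Prediction}), the affine term here being the constant $\mean[W]$ in the $j=0$ dynamics~\eqref{eq:VARXSimExp}; via the state-space/VARX equivalence of Assumption~\ref{ass:VARX} we work throughout with the VARX form~\eqref{eq:VARXSimExp} of~\eqref{eq:DynPCE} at $j=0$. The plan is: (i) identify the sequence $y^w$ built in~\eqref{eq:yw} (with $y_1^w=\mean[W]$) as the output response of the \emph{undisturbed} VARX~\eqref{eq:VARXDyn}, started from rest, to a single excitation of magnitude $\mean[W]$ at step~$1$; (ii) deduce by linearity that $\sum_{i=1}^{k}y_i^w$ is the forced response of~\eqref{eq:VARXDyn} to the constant $\mean[W]$ injected at every step; (iii) split $\pce{y}^0_k=y^u_k+\sum_{i=1}^{k}y^w_i$, so that $y^u$ obeys~\eqref{eq:VARXDyn}; and (iv) apply Lemma~\ref{lem:Prediction} to $y^u$.

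For step~(i), Assumption~\ref{ass:PE} guarantees that the stacked Hankel matrix inverted in~\eqref{eq:yw} has full row rank, so there is a combining vector $g$ whose data columns encode a trajectory of~\eqref{eq:VARXFree} — columns of Hankel matrices built from undisturbed data being themselves undisturbed trajectories, and~\eqref{eq:VARXFree} being linear — that has zero inputs over its length-$(\tini+N-1)$ window, zero outputs at positions $1,\dots,\tini-1$, and output $\mean[W]$ at position $\tini$; the map $\Hankel_{N-1}(\traud{y}{\tini+1}{T})$ then returns its outputs at positions $\tini+1,\dots,\tini+N-1$, which obey the VARX recursion. Relabelling position $\tini+k-1$ as ``time $k$'' for $k\in\I_{[1,N]}$, the initial window (times $2-\tini,\dots,1$) carries zero inputs and outputs $(0,\dots,0,\mean[W])$, and the recursion for times $2,\dots,N$ reads $y^w_k=\hat{A}\tra{y^w}{k-\tini}{k-1}$ since every input that enters is zero; with $y^w_1=\mean[W]$ and $y^w_k=0$ for $2-\tini\le k\le 0$ this is exactly the claimed impulse response, and it is well defined because the recursion for $\tra{y^w}{2}{N}$ never reaches below time $2-\tini$.

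Step~(ii) is a superposition argument: since~\eqref{eq:VARXFree} is linear and time invariant, the response from rest to $\mean[W]$ injected at steps $1,\dots,N$ equals the sum of the copies of $y^w$ shifted to start at those steps, which at time $k$ equals $\sum_{i=1}^{k}y^w_i$. For step~(iii), set $y^u_k\coloneqq\pce{y}^0_k-\sum_{i=1}^{k}y^w_i$ (the sum is empty for $k\le 0$, so $y^u_k=\pce{y}^0_k=\tilde{y}_k$ on the initial window). Because~\eqref{eq:VARXSimExp} and~\eqref{eq:VARXDyn} differ only by the term $\mean[W]$, which $\sum_{i=1}^{k}y^w_i$ accounts for exactly, $\tra{(\pce{u},\pce{y})^0}{1}{N}$ is a $j=0$ trajectory of~\eqref{eq:DynPCE} with initial data $\tra{(\tilde{u},\tilde{y})}{1-\tini}{0}$ if and only if the trajectory with inputs $(\tra{\tilde{u}}{1-\tini}{0},\tra{\pce{u}^0}{1}{N})$ and outputs $(\tra{\tilde{y}}{1-\tini}{0},\tra{y^u}{1}{N})$ satisfies the undisturbed VARX~\eqref{eq:VARXDyn}. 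For step~(iv), applying Lemma~\ref{lem:Prediction} after splitting each Hankel block of~\eqref{eq:FundaLemma} into its first $\tini$ rows (the initial window) and its last $N$ rows (the prediction), this holds if and only if there is $\pce{g}^0\in\R^{T-N-\tini+1}$ satisfying~\eqref{eq:TrajUYr}. Chaining the two equivalences gives the claim.

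I expect step~(i) to be the crux: one must keep the three distinct windowings in~\eqref{eq:yw} straight, use Assumption~\ref{ass:PE} to obtain a consistent combining vector, exploit that columns of Hankel matrices built from undisturbed data are themselves undisturbed trajectories so that the VARX recursion propagates the injected $\mean[W]$, and verify that the index bookkeeping genuinely reproduces the impulse response. Once $y^w$ is pinned down, steps~(ii)--(iv) amount to superposition together with a direct appeal to Lemma~\ref{lem:Prediction}.
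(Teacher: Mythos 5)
Your proposal is correct and follows essentially the same route as the paper's proof: the paper likewise splits $\pce{y}^0_k=y^u_k+\sum_{i=1}^{k}y^w_i$ with $y^w$ the autonomous response of the undisturbed VARX to the initial window $\tra{y^w}{2-\tini}{1}=[0,\dots,0,\mean[W]^\top]^\top$, obtains $\tra{y^w}{2}{N}$ by one application of Lemma~\ref{lem:Prediction} (yielding~\eqref{eq:yw}), and then applies Lemma~\ref{lem:Prediction} again to the disturbance-free system in $y^u$ to conclude~\eqref{eq:TrajUYr}. Your impulse-response/superposition reading of $y^w$ and the index relabelling match the paper's handling of the same subtlety that $y^w_1\neq\hat{A}\tra{y^w}{1-\tini}{0}$.
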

\begin{proof}
    The core idea of the proof is to decompose $\pce{y}_k^0$ into two parts $\pce{y}_k^0=y_k^u + \sum_{i=1}^{k}y_i^v$, where $y^u$ related to the input $\pce{u}^0$ and $y^v$ related to the disturbances.
	
	For $j=0$ it holds that $\pce{v}_k^0=\mean[V]$, $k\in\I_{[0,N-1]}$. Then we split system~\eqref{eq:VARXPCE} into a disturbance free system
	\begin{subequations}
		\begin{align}
			&y_k^u = \hat{A}\tra{y^u}{k-\tini}{k-1} + \hat{B}\tra{\pce{u}^0}{k-\tini}{k-1}, \label{eq:DynR}\\
			&\tra{(y^u,\pce{u}^0)}{1-\tini}{0} = \tra{(\tilde{y},\tilde{u})}{1-\tini}{0},
		\end{align}
	and an error system
		\begin{align}
			&y_k^{\tilde{v}} = \hat{A}\tra{y^{\tilde{v}}}{k-\tini}{k-1} + \mean[V],  \label{eq:DynE}\\
			&\tra{y^{\tilde{v}}}{1-\tini}{0} = 0.
		\end{align}
	\end{subequations}
	Moreover, \eqref{eq:DynE}  can be rewritten as $y^{\tilde{v}}_k = \sum_{i=1}^{k}y^v_i$, $k\in\I_{[1,N]}$ with the autonomous system
	\begin{subequations}
			\begin{align}
					&y^v_k = \hat{A}\tra{y^v}{k-\tini}{k-1}+\hat{B}\cdot0,\label{eq:DynTE}\\
					&\tra{y^v}{1-\tini}{0} = 0,\quad y_1^v =\mean[V].
				\end{align}
		\end{subequations}
	We notice that $y^v_1 \neq \hat{A}\tra{y^v}{1-\tini}{0}=0$. Therefore, we choose $\tra{y^v}{2-\tini}{1}=[0_{1\times(\tini-1)\dimy},\mean[V]^\top]^\top$ to be the initial condition, which is indeed not a trajectory of~\eqref{eq:DynTE}. However, applying Corollary~\ref{cor:Prediction} we can compute $\tra{y^v}{1}{N}$. That is, $\tra{y^v}{1}{N}$ satisfies \eqref{eq:DynTE} if and only if there exist a $g^v\in\R^{T-N-\tini+2}$ such that  
	\begin{equation} \label{eq:yv_constraint}
	 \left[\begin{array}{ll} \Hankel_{\tini+N-1}(\traud{u}{1}{T}) \\ \midrule \Hankel_{\tini-1}(\traud{y}{1}{T-N})\\ \Hankel_{1}(\traud{y}{\tini}{T-N+1}) \\ \Hankel_{N-1}(\traud{y}{\tini+1}{T}) \end{array}\right] g^v =
	 \left[\begin{array}{ll}  0_{(\tini+N-1)\dimu\times1} \\ \midrule 0_{(\tini-1)\dimy\times1} \\ \mean[V] \\ \tra{y^v}{2}{N} \end{array}\right].
	\end{equation}
	From the above equation we can explicitly compute the trajectory $\tra{y^v}{2}{N}$ as~\eqref{eq:yw}.
	Again, applying Corollary~\ref{cor:Prediction} to dynamics~\eqref{eq:DynR}, we conclude \eqref{eq:TrajUYr}. From $\pce{y}_k^0=y_k^u+y_k^{\tilde{v}}=y_k^u+\sum_{i=1}^{k}y_i^v$, we have the predicted trajectory $\tra{(\pce{u},\pce{y})^0}{1}{N}$.\End
\end{proof}
Note that \eqref{eq:yw} is given to represent $\tra{y^v}{2}{N}$ in closed form. Indeed, $\tra{y^v}{2}{N}$ can be obtained via solving the optimization problem
\[
	\min_{g^v}~1 \quad\text{s.t.}\quad \eqref{eq:yv_constraint}
\]
without the explicit computation of Moore-Penrose inverse in~\eqref{eq:yw}. Moreover, since the distribution of $V$ and hence $\mean[V]$ are known, one can compute the forward propagation of $\mean[V]$, i.e. the trajectory $\tra{y^v}{1}{N}$, in advance.

\begin{lem}[Propagation for $j\in\I_{[1,L-1]}$]\label{lem:jOther}
	Let the conditions of Lemma~\ref{lem:j0} hold.
	Then, for all $j \in \I_{[1,L-1]}$, $\tra{(\pce{u},\pce{y})^j}{1}{N}$ is a trajectory of \eqref{eq:DynPCE} if and only if there exist $\pce{g}^j \in \R^{T-N-\tini+1}$ such that
	\begin{subequations}\label{eq:UYjRest}
		\begin{gather}
			\tra{(\pce{u},\pce{y})^j}{1}{k^{\prime}(j)}=0,\quad \pce{y}_{k^\prime(j)+1}^j = \pce{v}^{I(j)}, \label{eq:UY0}\\
			\left[\begin{array}{ll} \Hankel_{\tini-1}(\traud{u}{1}{T-N-1})\\ \Hankel_{\tini-1}(\traud{y}{1}{T-N-1}) \\ \midrule \Hankel_{\bar{N}}(\traud{u}{\tini}{T-k^{\prime}-1}) \\ \Hankel_{\bar{N}}(\traud{y}{\tini}{T-k^{\prime}-1}) \end{array}\right] \pce{g}^j 
			=\left[\begin{array}{ll} 0_{(\tini-1)\dimu\times1} \\ 0_{(\tini-1)\dimy\times1} \\ \midrule\tra{\pce{u}^j}{k^{\prime}(j)+1}{N} \\ \tra{\pce{y}^j}{k^{\prime}(j)+1}{N} \end{array}\right], \label{eq:PCEUYj}
		\end{gather}
	\end{subequations}
	where $\bar{N}=N-k^{\prime}(j)$.
\end{lem}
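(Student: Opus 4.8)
The plan is to reduce Lemma~\ref{lem:jOther} to Lemma~\ref{lem:Prediction} applied to the \emph{undisturbed} VARX dynamics. The argument is in fact simpler than that of Lemma~\ref{lem:j0}, because for $j\in\I_{[1,L-1]}$ the only stochastic excitation entering the PCE dynamics is the one-time injection of $\pce{w}^{I(j)}$ at time $k^{\prime}(j)+1$, after which the recursion is genuinely disturbance-free. Fix $j\in\ik{i}$, so that $k^{\prime}(j)=i$ and $\bar{N}=N-k^{\prime}(j)$. Using Assumption~\ref{ass:VARX} (which lifts to the PCE coefficient dynamics~\eqref{eq:VARXPCE}) together with the causality analysis underlying~\eqref{eq:Causality} and~\eqref{eq:VARXSimDist}, a sequence $\tra{(\pce{u},\pce{y})^j}{1}{N}$ is a (causal) trajectory of~\eqref{eq:DynPCE} for index $j$ if and only if (i) the boundary relations~\eqref{eq:Causality} hold, which---since $\tra{(\tilde{\pce{u}},\tilde{\pce{y}})^j}{1-\tini}{0}=0$ for $j\geq1$---amount to $\tra{(\pce{u},\pce{y})^j}{1-\tini}{k^{\prime}(j)}=0$ and $\pce{y}_{k^{\prime}(j)+1}^j=\pce{w}^{I(j)}$, that is,~\eqref{eq:UY0}; and (ii) the undisturbed recursion of~\eqref{eq:VARXSimDist} holds for $k\in\I_{[k^{\prime}(j)+2,N]}$. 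Indeed, the disturbance coefficient $\pce{w}^j_{k-1}$ is nonzero only for $k=k^{\prime}(j)+1$, where the zero history forced by~(i) turns it into the initial output value $\pce{y}_{k^{\prime}(j)+1}^j=\pce{w}^{I(j)}$.

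It then remains to recast~(ii) as the Hankel identity~\eqref{eq:PCEUYj}. Consider the segment $\tra{(\pce{u},\pce{y})^j}{k^{\prime}(j)+2-\tini}{N}$: its first $\tini-1$ entries vanish by~\eqref{eq:Causality}, and its remaining $\bar{N}$ entries equal $\tra{(\pce{u},\pce{y})^j}{k^{\prime}(j)+1}{N}$. By~(ii) this segment is a trajectory of the undisturbed system~\eqref{eq:VARXFree} of length $\tini-1+\bar{N}$, with its first $\tini$ samples serving as the VARX initial condition and the last $\bar{N}-1$ samples obeying the recursion. Applying Lemma~\ref{lem:Prediction} with the prediction horizon reduced from $N$ to $\bar{N}-1=N-k^{\prime}(j)-1$ characterises this segment as lying in the column space of the Hankel matrix of depth $\tini-1+\bar{N}$ formed from the offline data $\traud{(u,y)}{1}{T}$. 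Partitioning that Hankel matrix into its first $\tini-1$ block rows---which then have to match the vanishing initial entries, giving $\Hankel_{\tini-1}(\traud{u}{1}{T-N-1})$ and $\Hankel_{\tini-1}(\traud{y}{1}{T-N-1})$ equated to zero---and its remaining $\bar{N}$ block rows---which match $\tra{(\pce{u},\pce{y})^j}{k^{\prime}(j)+1}{N}$, giving $\Hankel_{\bar{N}}(\traud{u}{\tini}{T-k^{\prime}-1})$ and $\Hankel_{\bar{N}}(\traud{y}{\tini}{T-k^{\prime}-1})$---produces~\eqref{eq:PCEUYj}; the data window is truncated to $T-k^{\prime}(j)-1$ precisely so that the number of columns, and hence $\dim\pce{g}^j$, equals $T-N-\tini+1$ independently of $j$. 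Running the two implications in reverse shows, conversely, that any $(\pce{g}^j,\tra{(\pce{u},\pce{y})^j}{1}{N})$ satisfying~\eqref{eq:UY0}--\eqref{eq:PCEUYj} yields, together with the zero initial block, a trajectory of~\eqref{eq:VARXSimDist}, hence of~\eqref{eq:DynPCE}.

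The conceptual substance---converting the process disturbance into an initial output condition for each PCE coefficient---has essentially been carried out in the causality discussion preceding the lemma, so what is left is bookkeeping. The step I would be most careful about is matching the exact block structure and data windows of~\eqref{eq:PCEUYj}: in particular, the truncation to $T-k^{\prime}(j)-1$ simultaneously fixes $\dim\pce{g}^j$ and turns the persistency-of-excitation requirement at the reduced horizon $\bar{N}-1$ into a mere \emph{row-deletion} sub-matrix of the matrix in Assumption~\ref{ass:PE} (whose column count it shares), so that full row rank is inherited directly rather than re-proved. I would also check the degenerate cases $k^{\prime}(j)=0$ (where~\eqref{eq:UY0} collapses to $\pce{y}_1^j=\pce{w}^{I(j)}$ and $\bar{N}=N$) and $k^{\prime}(j)=N-1$ (where $\bar{N}=1$ and no recursion step survives), confirming that every Hankel block index range stays nonempty; the remaining column-space manipulations are routine given Lemma~\ref{lem:Prediction}.
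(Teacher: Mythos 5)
Your proposal is correct and follows essentially the same route as the paper's (very terse) proof: causality forces the zero history, the disturbance coefficient $\pce{w}^{I(j)}$ becomes an initial output value at time $k^{\prime}(j)+1$, and Lemma~\ref{lem:Prediction} is then applied with the shifted $(\tini-1)$-length zero initial window and the shortened horizon $\bar{N}-1$. Your additional bookkeeping (the row-deletion argument showing persistency of excitation is inherited at the reduced horizon, and the checks of the degenerate cases $k^{\prime}(j)=0$ and $k^{\prime}(j)=N-1$) is sound and in fact makes explicit steps the paper leaves implicit.
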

\begin{proof}
	The causality condition in the PCE framework, i.e. \eqref{eq:Causality}, implies \eqref{eq:UY0}.
	Similar to Lemma~\ref{lem:j0}, we notice that
	\[
		\pce{y}_{k^{\prime}(j)+1}^j\neq \hat{A}\tra{\pce{y}^j}{k^{\prime}(j)-\tini+1}{k^{\prime}} +  \hat{B}\tra{\pce{u}^j}{k^{\prime}(j)-\tini+1}{k^{\prime}(j)}
	\]
	with $\tra{(\pce{u},\pce{y})^j}{k^{\prime}(j)-\tini+1}{k^{\prime}(j)}=0$. Hence, the initial condition is given as $\tra{(\pce{u},\pce{y})^j}{k^{\prime}(j)-\tini+2}{k^{\prime}(j)}=0$ and $\pce{y}_{k^{\prime}(j)+1}^j=\pce{v}^{I(j)}$, while $\pce{u}_{k^{\prime}(j)+1}^j$ remains an input variable. Then applying Corollary~\ref{cor:Prediction}, \eqref{eq:PCEUYj} immediately follows.\End
\end{proof}
As a by-product, the prediction horizon of \eqref{eq:DynPCE} for $j\in\I_{[1,L-1]}$ is shortened from $N$ to $N-k^{\prime}(j)$. Consequently, we have smaller Hankel matrices and less PCE coefficients as decision variables when we apply Lemma~\ref{lem:jOther} to stochastic OCPs. Therefore, Lemma~\ref{lem:jOther} also accelerates the computation in numerical implementations, see the numerical example in Section~\ref{sec:Simulation}.
Summarizing Lemmas~\ref{lem:j0}-\ref{lem:jOther}, we conclude the following lemma in random variables.

\begin{lem}[Propagation in random variables]\label{lem:PropRV}
	Let Assumptions~\ref{ass:Sys}, \ref{ass:iid}, and \ref{ass:PE} hold.
	Then $\tra{(U,Y)}{1}{N}$ is a trajectory of \eqref{eq:Dyn} for a measured initial trajectory $\tra{(\tilde{u},\tilde{y})}{1-\tini}{0}$ if and only if there exist $g\in\R^{T-N-\tini+1}$ and $G^{v_i}\in\splx{T-N-\tini+1}$, $i\in\I_{[0,N-1]}$ such that $Z_k = \mean[Z] + \sum_{i=0}^{k-1} Z_k^{v_i}$, $Z\in\{U,Y\}$, where
	\begin{enumerate}[label=(\roman*)]
		\item\label{Cond1} $\mean[\tra{(U,Y)}{1}{N}]=\tra{(\pce{u},\pce{y})^0}{1}{N}$ and $g=\pce{g}^0$ in Lemma~\ref{lem:j0},
		\item\label{Cond2}  $\tra{(U,Y)^{v_i}}{1-\tini}{i} = 0$, $Y^{v_i}_{i+1} = V_i-\mean[V]$, and
		\[
		\left[\begin{array}{ll} \Hankel_{\tini-1}(\traud{u}{1}{T-N-1})\\ \Hankel_{\tini-1}(\traud{y}{1}{T-N-1}) \\ \midrule \Hankel_{N-i}(\traud{u}{\tini}{T-i-1}) \\ \Hankel_{N-i}(\traud{y}{\tini}{T-i-1}) \end{array}\right] G^{v_i} 
		=\left[\begin{array}{ll} 0_{(\tini-1)\dimu\times1} \\ 0_{(\tini-1)\dimy\times1} \\ \midrule \tra{U^{v_i}}{i+1}{N} \\ \tra{Y^{v_i}}{i+1}{N} \end{array}\right].
		\]
	\end{enumerate}
\end{lem}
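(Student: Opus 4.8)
The plan is to reduce Lemma~\ref{lem:PropRV} to the already-established Lemmas~\ref{lem:j0} and~\ref{lem:jOther} by passing through the PCE coefficients in the joint basis $\Phi$ from Section~\ref{sec:Structure}. First I would invoke Proposition~1 of \citet{pan23stochastic} to assert that every $Z_k$, $Z\in\{U,Y\}$, $k\in\I_{[0,N]}$, of the VARX model~\eqref{eq:VARX} admits an exact PCE in the finite joint basis $\Phi$ of dimension $L=1+N(L_w-1)$. Thus $\tra{(U,Y)}{1}{N}$ is a trajectory of~\eqref{eq:Dyn} (equivalently, of the VARX model~\eqref{eq:VARX} under Assumption~\ref{ass:VARX}) if and only if its PCE coefficients $\tra{(\pce{u},\pce{y})^j}{1}{N}$ are trajectories of the PCE dynamics~\eqref{eq:DynPCE} for every $j\in\I_{[0,L-1]}$ --- this is just the Galerkin-projection equivalence noted after~\eqref{eq:DynPCE}, together with uniqueness of coefficients in an exact expansion.

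Next I would organize the $L$ coefficient indices according to the partition $\I_{[0,L-1]}=\{0\}\cup\bigcup_{i=0}^{N-1}\ik{i}$ and treat the two blocks separately. For $j=0$, Lemma~\ref{lem:j0} gives the equivalence between $\tra{(\pce{u},\pce{y})^0}{1}{N}$ being a trajectory of~\eqref{eq:DynPCE} and the existence of $\pce{g}^0\in\R^{T-N-\tini+1}$ satisfying~\eqref{eq:UYj0}; setting $g\coloneqq\pce{g}^0$ and recalling that $\mean[\tra{(U,Y)}{1}{N}]=\tra{(\pce{u},\pce{y})^0}{1}{N}$ because $\phi^0=1$ and $\mean[\phi^j]=0$ for $j>0$, this is exactly condition~\ref{Cond1}. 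For each $i\in\I_{[0,N-1]}$, the coefficients indexed by $j\in\ik{i}$ all share the same causality structure $k^{\prime}(j)=i$, and by~\eqref{eq:RVDecompositionPCE} they assemble into the error-system random variable $Z_k^{w_i}=\sum_{j\in\ik{i}}\pce{z}_k^j\phi^j$. Collecting the $L_w-1$ scalar relations~\eqref{eq:UYjRest} of Lemma~\ref{lem:jOther} across $j\in\ik{i}$ --- they have identical left-hand Hankel matrices and identical horizon $\bar N=N-i$ --- and stacking the corresponding $\pce{g}^j$ into a random variable $G^{w_i}\in\splx{T-N-\tini+1}$ (concretely $G^{w_i}\coloneqq\sum_{j\in\ik{i}}\pce{g}^j\phi^j$), the block equation in condition~\ref{Cond2} is precisely the $\Phi$-recombination of~\eqref{eq:PCEUYj}, and the causality constraints $\tra{(U,Y)^{w_i}}{1-\tini}{i}=0$, $Y^{w_i}_{i+1}=W_i-\mean[W]$ are the recombination of~\eqref{eq:UY0} via~\eqref{eq:Causality} and $W_i-\mean[W]=\sum_{j\in\ik{i}}\pce{w}^{I(j)}\phi^j$. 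Finally, summing the decompositions~\eqref{eq:RVDecompositionPCE} over all blocks reproduces $Z_k=\mean[Z_k]+\sum_{i=0}^{k-1}Z_k^{w_i}$, where the upper limit $k-1$ (rather than $N-1$) follows because $Z_k^{w_i}=0$ for $i\geq k$ by the causality constraint.

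The one direction that needs a little care is the converse: given $g$, $\{G^{w_i}\}$ and random variables satisfying \ref{Cond1}--\ref{Cond2}, I must recover genuine PCE coefficients and apply the "if" parts of Lemmas~\ref{lem:j0} and~\ref{lem:jOther}. Here I would define $\pce{z}_k^j$ as the $j$-th coefficient of the given $Z_k$ in the joint basis $\Phi$ (these exist and are unique because $\Phi$ spans the relevant $\lsp$ subspace), observe that \ref{Cond2} forces $Z_k^{w_i}$ to lie in $\mathrm{span}\{\phi^j:j\in\ik{i}\}$ so that the block equation genuinely decouples into the scalar equations~\eqref{eq:PCEUYj} for each $j\in\ik{i}$ with $\pce{g}^j$ the corresponding component of $G^{w_i}$, and then invoke Lemma~\ref{lem:jOther} to conclude each $\tra{(\pce{u},\pce{y})^j}{1}{N}$ solves~\eqref{eq:DynPCE}; the $j=0$ case is handled identically by Lemma~\ref{lem:j0}. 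Reassembling via the exact PCE then yields that $\tra{(U,Y)}{1}{N}$ solves~\eqref{eq:Dyn}.

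The main obstacle I anticipate is the bookkeeping of index shifts and Hankel-matrix depths when passing between the "stacked over $j\in\ik{i}$" picture and the random-variable picture --- in particular making sure that the horizon truncation $\bar N=N-i$, the block of zero initial conditions of depth $\tini-1$ in~\eqref{eq:PCEUYj}, and the causality window $\tra{\cdot}{1-\tini}{i}=0$ are mutually consistent, and that the $G^{w_i}$ assembled from component $\pce{g}^j$'s is well-defined as an element of $\splx{T-N-\tini+1}$ independent of which representative basis elements one uses. None of this is conceptually deep once the coefficient-wise Lemmas are in hand, but it is where an error would most plausibly hide, so I would state the $\Phi$-recombination identities $Z_k^{w_i}=\sum_{j\in\ik{i}}\pce{z}_k^j\phi^j$ and $W_i-\mean[W]=\sum_{j\in\ik{i}}\pce{w}^{I(j)}\phi^j$ explicitly and verify the block equation term by term against~\eqref{eq:PCEUYj}.
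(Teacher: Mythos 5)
Your proposal is correct and takes essentially the same route as the paper: decompose via superposition and the joint basis $\Phi$, apply Lemmas~\ref{lem:j0} and~\ref{lem:jOther} coefficient-wise, and recombine with $G^{w_i}=\sum_{j\in\ik{i}}\pce{g}^j\phi^j$. The only (cosmetic) divergence is in the sufficiency direction, where the paper invokes the stochastic fundamental lemma directly on the decomposed random-variable subsystems~\eqref{eq:SubNom}--\eqref{eq:SubError} instead of re-projecting onto the PCE coefficients; both reduce to the same column-space argument, and the paper is just as terse as you are about the point you flag, namely why condition~\ref{Cond2} keeps $Z^{w_i}$ in $\mathrm{span}\{\phi^j: j\in\ik{i}\}$.
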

\begin{proof}
	First we prove that the conditions~\ref{Cond1}-\ref{Cond2} are necessary. Consider any trajectory~$\tra{(U,Y)}{1}{N}$ of system~\eqref{eq:Dyn} and its decomposition~\eqref{eq:Superposition}. Replacing all the inputs and outputs with their PCEs as~\eqref{eq:RVDecompositionPCE}, we obtain the PCE reformulated dynamics~\eqref{eq:SuperpositionPCE}, which are equivalent to the data-driven representations~\eqref{eq:UYj0} and \eqref{eq:UYjRest} as Lemmas~\ref{lem:j0}-\ref{lem:jOther} have shown. Since $\phi^0=1$, the condition~\ref{Cond1} directly follows from \eqref{eq:UYj0}. Then for the PCE representation~\eqref{eq:UYjRest}, $j\in\ik{i}$, we multiply them with the corresponding PCE basis functions $\phi^j$, $j\in\ik{i}$ and sum the results over. Let $G^{v_i}\coloneqq \sum_{j\in\ik{i}}\pce{g}^j\phi^j$, we get the condition~\ref{Cond2}.
	
	Next we show the sufficiency of the conditions~\ref{Cond1}-\ref{Cond2}. The stochastic fundamental lemma, i.e. Lemma~\ref{lem:StochFundam}, indicates that the trajectories~$\tra{(\mean[U],\mean[Y])}{1}{N}$ and $\tra{(U,Y)^{v_i}}{1}{N}$ are trajectories of the decomposed systems~\eqref{eq:SubNom} and \eqref{eq:SubError}, respectively, when the conditions \ref{Cond1}-\ref{Cond2} hold. Thus, $(U,Y)_k = (\mean[U],\mean[Y])_k + \sum_{i=0}^{k-1} (U,Y)_k^{v_i}$, $k\in\I_{[1,N]}$ satisfy the dynamics~\eqref{eq:VARX} as well as~\eqref{eq:Dyn}.\End
\end{proof}
\begin{rem}[Extension to $\tra{(\tilde{U},\tilde{Y})}{1-\tini}{0}$] \label{rem:UncertainIni}
	Consider system~\eqref{eq:Dyn} with an uncertain initial input-output trajectory $\tra{(U,Y)}{1-\tini}{0}=\tra{(\tilde{U},\tilde{Y})}{1-\tini}{0}$. One can split $Z_k = \mean[Z] +Z^{\ini} + \sum_{i=0}^{k-1} Z_k^{v_i}$, $Z\in\{U,Y\}$ with
	\begin{align*}
		&Y^{\ini}_k = \hat{A}\tra{Y^{\ini}}{k-\tini}{k-1} + \hat{B}\tra{U^{\ini}}{k-\tini}{k-1},\\
		&\tra{(U,Y)^{ini}}{1-\tini}{i} = \tra{(\tilde{U}-\mean[\tilde{U}],\tilde{Y}-\mean[\tilde{Y}])}{1-\tini}{0}.
	\end{align*}
	The forward propagation for the PCE coefficients of $Z^{\ini}$, $Z\in\{U,Y\}$ is a simplified case of the propagation of expectation in Lemma~\ref{lem:j0} with $\mean[V]=0$, i.e., the predicted trajectory of the PCE coefficients of an uncertain initial condition satisfies~\eqref{eq:TrajUYr}.
	Then, besides the conditions of Lemma~\ref{lem:PropRV}, $\tra{(U,Y)}{1}{N}$ is a trajectory of \eqref{eq:Dyn} if and only if there exists $G^{\ini}\in\splx{T-N-\tini+1}$ such that 
	\[
			\left[\begin{array}{ll} \Hankel_{\tini}(\traud{u}{1}{T-N-1})\\ \Hankel_{\tini}(\traud{y}{1}{T-N-1}) \\ \midrule \Hankel_{N}(\traud{u}{\tini}{T-i-1}) \\ \Hankel_{N}(\traud{y}{\tini}{T-i-1}) \end{array}\right] G^{\ini}
		=\left[\begin{array}{ll} \tra{(\tilde{U}-\mean[\tilde{U}])}{1-\tini}{0}\\ \tra{(\tilde{Y}-\mean[\tilde{Y}])}{1-\tini}{0} \\ \midrule \tra{U^{\ini}}{1}{N} \\ \tra{Y^{\ini}}{1}{N} \end{array}\right].
	\]
	A similar result in PCE coefficients also immediately follows and is omitted for brevity.
\end{rem}

\begin{table*}[t!]
	\caption{Comparison of different fundamental lemmas for LTI systems.}
	\label{tab:ComparisonLemmas}
	\centering
		\begin{tabular*}{\textwidth}{@{\extracolsep{\fill}} llcc}
			\toprule
			Lemma  & \hspace{30pt} System  & Data & Number of entries in $\Hankel$\\
			\midrule
			\citet{willems05note}  & \hspace{2pt}$\begin{array}{rcl} x_{k+1} &= & Ax_k+Bu_k,\\ y_k &=& Cx_k+Du_k\end{array}$ & $(u,y)^{\da}$ & $(\tini+N)(n_u+n_y)n_g$\\
			\citet{berberich22linear} & \hspace{2pt}$\begin{array}{rcl} x_{k+1} &= & Ax_k+Bu_k+e,\\ y_k &=& Cx_k+Du_k+r\end{array}$ & $(u,y)^{\da}$ & $(\tini+N)(n_u+n_y)n_g$ \\
			\citet{kerz23data}     & $\begin{array}{rcl} X_{k+1}&=&AX_k+BU_k+EW_k \\ \hat{X}_k &=&X_k+H_k \end{array} $ & $(u,x,w)^{\da}$ & $N(n_u+n_x+n_w)n_g$\\
			\citet{chiuso25harnessing} & $\begin{array}{rcl} \mean[Y_{k}] &= & \cdots,\\ \mathrm{Cov}[Y_k] &=& \cdots \end{array}$ & $(u,y)^{\da}$ & $(\tini+N)(n_u+n_y)n_g$\\  
			\begin{tabular}{@{}l@{}} \citet{pan23stochastic} \\ (Lemma~\ref{lem:StochFundam})\\ \end{tabular} &  $\begin{array}{rcl} X_{k+1} &= & AX_k+BU_k+EW_k,\\ Y_k &=& CX_k+DU_k\end{array}$ & $(u,y,w)^{\da}$ & $(\tini+N)(n_u+n_y+n_w)n_gL$ \\
			Lemma~\ref{lem:PropRV} &
			\hspace{11pt}$\begin{array}{rcl} Y_k & = & \hat{A}\tra{Y}{k-\tini}{k-1} \\ & &+\hat{B}\tra{U}{k-\tini}{k-1} + V_{k-1}\end{array}$ 
			& $(u,y)^{\ud}$ & $\approx(\tini+\frac{N}{2})(n_u+n_y)n_gL$ \\ 
			\bottomrule
		\end{tabular*}
\end{table*}

In Table~\ref{tab:ComparisonLemmas}, we compare the different fundamental lemmas, including the stochastic fundamental lemma by \citet{pan23stochastic} (Lemma~\ref{lem:StochFundam}) and Lemma~\ref{lem:PropRV}; \citet{berberich22linear} consider deterministic affine systems, \citet{kerz23data} propose a pre-stabilized deterministic fundamental lemma for stochastic LTI systems subject to process disturbances and measurement noise, while \citet{chiuso25harnessing} propagates the first two moments propagation using a separation principle. We also compare the required data and the number of entries in the Hankel matrices, denoted by $\Hankel$ in Table~\ref{tab:ComparisonLemmas}, in different variants of the fundamental lemma. The number of entries in a Hankel matrix corresponds to its size and thus, to some extent, it reflects the computational complexity of applying the corresponding results. Here $n_g$ denotes the dimension of vector $g$ or $G$ and $L$ denotes the PCE dimension, which is proportional to the prediction horizon $N$ and is $L=1+N(L_w-1)$ for measured initial condition. Let $n_w=n_y$ and $\tini\ll N$, then we have the approximation
\[
	\frac{\text{number entries in~}\Hankel~\text{of Lemma~\ref{lem:PropRV}}}{\text{number entries in~}\Hankel~\text{of Lemma~\ref{lem:StochFundam}}}\approx\frac{n_u+n_y}{2(n_u+2n_y)}.
\]
The computational complexity of finding a feasible $G$ in Lemma~\ref{lem:StochFundam} is $\mcl{C}_1=\mcl{O}\Big((\tini(\dimu+\dimy)+(\tini+N)\dimy)\left(\tini(\dimu+\dimy)+(\tini+N)\dimy+L\right)n_g \Big)$ for the case $\dimw=\dimy$, while the corresponding computational complexity in Lemma~\ref{lem:PropRV} is $\mcl{C}_2=\mcl{O}(\tini(\dimu{+}\dimy)\cdot\left(\tini(\dimu{+}\dimy)+L\right)n_g)$. Comparing the computational complexities $\mcl{C}_1$ and $\mcl{C}_2$, one obtains
\begin{align} 
	&\frac{\mcl C_2}{\mcl C_1} {=} 
	\scalebox{0.95}{$\displaystyle\left(1 {+} \frac{(\tini{+}N)\dimy}{\tini(\dimu{+}\dimy)}\right)^{-1} \cdot \left(1{+}\frac{(\tini{+}N)\dimy}{\tini(\dimu{+}\dimy){+}N\dimy{+}1}\right)^{-1}$} \nonumber\\
	\leq & \scalebox{0.9}{$\displaystyle \left(1 {+} \frac{(\tini{+}N)\dimy}{(\tini{+} N)(\dimu{+}\dimy)}\right)^{-1} \cdot \left(1{+}\frac{(\tini{+}N)\dimy}{\tini(\dimu{+}\dimy){+}N\dimy{+}N\dimu}\right)^{-1}$} \nonumber\\
	= &\left(1 + \frac{\dimy}{\dimu+\dimy} \right)^{-2}. \label{eq:CC}
\end{align}
That is, the computational complexity of Lemma~\ref{lem:PropRV} is less than  $\left(1 + \frac{\dimy}{\dimu+\dimy} \right)^{-2}$ of that of Lemma~\ref{lem:StochFundam} for the same system and prediction horizon $N$.

\begin{rem}[Computational complexity] \label{rem:Computation}
Consider a system with $\dimy\geq\dimu$ and a prediction horizon $N\geq\tini\geq 1$. The maximum of $\frac{\mcl C_2}{\mcl C_1}$ is $\max \frac{\mcl C_2}{\mcl C_1} =\frac13$, which occurs when $\dimy=\dimu$ and $N=\tini$. In other words, Lemma~\ref{lem:PropRV} reduces the computational complexity by at least $66.7\%$ compared to Lemma~\ref{lem:StochFundam} when $\dimy=\dimu$.
\end{rem}

\subsection{Estimation of Disturbance-Free Data} \label{sec:Estimation}
In this section, we first propose a procedure to find a stabilizing feedback gain for the stochastic LTI system~\eqref{eq:VARX}. Then we show how one can estimate an undisturbed trajectory of~\eqref{eq:VARXFree} from the recorded realization trajectory $\trad{(u,y,v)}{1-\tini}{T}$ of~\eqref{eq:VARXReal}.

Given an unstable stochastic LTI system with a sequence of randomly sampled input in the offline data collection phase, the system response grows exponentially over time. Consequently, the constructed Hankel matrix of output data may exhibit ill-conditioning, which causes numerical issues. To prevent the system response from diverging, we design a stabilizing feedback controller
\[
	u_k^{\da}=Kz_k^{\da} +\epsilon_k^{\da} \text{ with }z_k^{\da} = \begin{bmatrix} \traud{u}{k-\tini}{k-1}\\ \traud{y}{k-\tini}{k-1} \end{bmatrix},
\]
where $\epsilon_k^{\da}$ is small additive random noise to guarantee persistency of excitation of the inputs. Based on recorded input-output data, one can compute $K$ via solving an optimization problem, see \citet{doerfler23on} for state feedback and \citet{pan25data} for output feedback. When disturbance measurements are unavailable, an estimator for the disturbance realizations is required to obtain $K$, e.g., the least-square estimator~\eqref{eq:Estimator}.
Note that the accuracy of the estimated disturbances remains an open question. Moreover, the recorded input-output trajectory and estimated disturbance realizations, i.e., $\trad{(u,y,\hat{v})}{0}{T-1}$, satisfy dynamics~\eqref{eq:VARX} but with different system matrices $\hat{A}$ and $\hat{B}$, cf. Corollary~3 by \citet{pan25data}. Therefore, Lemma~\ref{lem:StochFundam} and \ref{coro:StochFundamPCE} remain valid with the estimated disturbances.

Due to the estimation error of disturbance realizations, the computed feedback gain $K$ is not guaranteed to stabilize the system~\eqref{eq:DynReal}. Here we propose the following experimental procedure to resolve this issue:
\begin{enumerate}[label=(\roman*)]
	\item Sample an input-state trajectory of system~\eqref{eq:DynReal} for a short length, i.e. $\trad{(u,y)}{0}{T}$ with small $T$, and estimate the disturbance $\trad{\hat{v}}{0}{T-1}$.
	\item Solve the optimization problem proposed in \citet{pan25data} with the sampled data and compute the corresponding $K$.
	\item Let the feedback be $u_k = Kz_k + \epsilon_k$, where $\epsilon_k$ is uniformly sampled from a small interval, e.g., $[-10^{-3},10^{-3}]$, and implement the input.
	\item Sample the output data of system~\eqref{eq:DynReal} and check whether it stays in a neighborhoodf of the origin. If not, go to step (i) and repeat the procedure with current input policy; otherwise the procedure terminates.
\end{enumerate}
With a stabilizing feedback $K$, we can sample $\trad{(u,y)}{0}{T}$ of~\eqref{eq:DynReal} and construct the Hankel matrices with acceptable condition numbers. 

Consider system~\eqref{eq:VARX} and a corresponding input-output realization trajectory $\trad{(u,y)}{1-\tini}{T}$. We estimate the disturbance realizations $\trad{\hat{v}}{1-\tini}{T-1}$ for the least-square estimator.
Let $\trad{u}{0}{T-1}$ be persistently exciting of order $\hat{T}+\dimx$. Then an estimation of an undisturbed trajectory $\traud{(u,y)}{1}{\hat{T}}$ of system~\eqref{eq:VARXFree} can be computed from
\begin{equation}\label{eq:PredictorT}
\begin{bmatrix*}[l] \Hankel_{\hat{T}+\tini}(\trad{u}{1-\tini}{T}) \\ \Hankel_{\hat{T}+\tini}(\trad{y}{1-\tini}{T}) \\ \Hankel_{\hat{T}}(\trad{\hat{v}}{0}{T-1}) \end{bmatrix*} g = \begin{bmatrix*}[l] \traud{u}{1-\tini}{\hat{T}} \\ \traud{y}{1-\tini}{\hat{T}} \\ 0_{\hat{T}n_y\times1} \end{bmatrix*},
\end{equation}
where the initial trajectory $\traud{(u,y)}{1-\tini}{0}$ is an arbitrary piece of the recorded data $\trad{(u,y)}{1-\tini}{T}$ and is thus known. It is straightforward to see that $\hat{T}\ll T$. Thus, we may need to repeat the procedure \eqref{eq:PredictorT} until an undisturbed trajectory of sufficient length to construct Hankel matrices is obtained.
Moreover, similar to Lemma~3 by \citet{kerz23data}, we can modify~\eqref{eq:PredictorT} to compute a trajectory $\tra{(\epsilon^{\da},z^{\ud})}{1}{\hat{T}}$ close to the origin
\begin{equation} \label{eq:PredictorModify}
\begin{bmatrix*}[l] \Hankel_{\hat{T}}(\trad{(u-Kz)}{1}{T})  \\ \Hankel_{\hat{T}}(\trad{z}{1}{T}) \\ \Hankel_{\hat{T}}(\trad{\hat{v}}{0}{T-1})  \end{bmatrix*} g = \begin{bmatrix*}[l]
	\trad{\epsilon}{1}{\hat{T}} \\ \traud{z}{1}{\hat{T}} \\ 0_{\hat{T}n_y\times 1} \end{bmatrix*},
\end{equation}
where the input-output trajectory is included in $\traud{z}{1}{\hat{T}}$. This way, we construct the Hankel matrices from the computed data of~\eqref{eq:VARXFree} with acceptable condition numbers.

\subsection{Extensions Towards Different Perspectives} \label{sec:Extension}
The reader may wonder how to relax Assumption~\ref{ass:iid}, which assumes the disturbances to be i.i.d. with known distribution. First we consider non-i.i.d. disturbances with known distributions. In this case, one chooses a different PCE basis for each $V_k$, $k\in\I_{[0,N-1]}$ and then constructs a joint PCE basis similar to~\eqref{eq:BasisElement}. This way, one still obtains the PCE coefficients of the disturbances, i.e. $\pce{v}_k^j$ for $k\in\I_{[0,N-1]}$, $j\in\I_{[0,L-1]}$. Then, it is straightforward to show that Lemmas~\ref{lem:j0}, \ref{lem:jOther}, and~\ref{lem:PropRV} also hold for non-i.i.d. disturbances along the same line as the i.i.d. case. The detailed proofs are omitted here.

Moreover, in a distributionally robust setting, the requirement for knowledge of the disturbance distribution can be relaxed to only its first two moments. Consider the disturbance $V$ with covariance decomposition
\[
	\var[V]=\Sigma_v = M_vM^\top_v.
\]
Then we get $V=\mean[V] + M_v\xi_v$, where $\xi_v$ is the normalized $V$ with zero mean. The PCE of $V$ follows as $V = \sum_{n=0}^{\dimy} \pce{v}^n\psi^n(\xi_v)$ with
\[
	\pce{v}^0=\mean[V],\quad \begin{bmatrix}
	\pce{v}^1 & \pce{v}^2 &\cdots & \pce{v}^{\dimy}
	\end{bmatrix}  = M_v.
\]
As the distributions of $V$ and $\xi_v$ are unknown, the PCE basis $\psi^n(\xi_v)$ can be selected as an arbitrary orthonormal basis, which may deviate from the basis induced by the distribution of $V$. We refer to \citet{pan25data} for a detailed analysis of recursive feasibility and other closed-loop properties.

When it comes to the measurement noise, one may consider a stochastic LTI system 
\begin{subequations} \label{eq:dyn_measure}
	\begin{align}
		X_{k+1} &= AX_k + BU_k + EW_k,\quad X_0=\tilde{X}_0,\\
		Y_k &= CX_k + F H_k,
	\end{align}
\end{subequations}
where $H_k\in\splx{n_h}$ is the measurement noise. \citet{casals12from} have shown that \eqref{eq:dyn_measure} is also equivalent to VARX model~\eqref{eq:VARX}, where $V$ captures the effect of both process disturbances and measure noise. Therefore, one can predict $\tra{Y}{1}{N}$ that represents the future output measurements in random variables via the original and the modified stochastic fundamental lemma, rather than the actual output values. On the other hand, in terms of closed-loop control regularized data-driven schemes can also be employed and have shown good performance, see, e.g., \citep{doerfler23bridging,kladtke23implicit, breschi23data}.
\begin{exmp}[Measurement noise]
Consider VARX model~\eqref{eq:VARX} in the presence of both process disturbances and measurement noise. Let the output $Y$ be scalar. The process disturbances follow a Gaussian distribution $\mcl{N}(\mu,\sigma^2)$, and measurement noise follows a uniform distribution $\mcl{U}(-a,a)$ with $a>0$. Since the term $V_k$ summarizes the combined effect of process disturbances and measurement noise, the PCE of $V_k$ is given by
\begin{gather*}
	V_k = \mu \cdot \psi^0 + \sigma \cdot \psi^1(\xi_k^{p}) + 2a\cdot \psi^2(\xi_k^{m})\\
	\text{with}\quad \psi^0=1,\quad \psi^1(\xi_k^{p})= \xi_k^{p}, \quad \psi^2(\xi_k^{m}) = \xi_k^{m}-0.5,
\end{gather*}
where $\xi_k^{p}$ and $\xi_k^{m}$ are independent stochastic arguments and $\xi_k^{p}\sim\mcl{N}(0,1)$, $\xi_k^{m}\sim\mcl{U}(0,1)$. The additional PCE term $2a\cdot \psi^2(\xi_k^{m})$ therefore represents the measurement noise. Hence, Lemmas~\ref{lem:j0}-\ref{lem:PropRV} remain valid when the measurement noise is also considered.
\end{exmp}
\section{Numerical example} \label{sec:Simulation}
We consider a continuous-time LTI helicopter model from \citet{keel88robust} and discretize it with sampling time $t_s=\SI{0.5}{\second}$.\footnote{The code for the numerical example is available at: \url{https://github.com/OptCon/StochWillemsLemma}. The proposed algorithm will be integrated into the toolbox \texttt{PolyOCP.jl} \citep{ou26polyocp}.} The system is unstable but controllable and observable. The VARX matrices in~\eqref{eq:VARX} are
\begin{align*}
		\hat{A} &= \begin{bmatrix*}[r]
		6.1223 &\quad -0.6048 &\quad -5.2759	&\quad \phantom{-}1.1125\\
		46.1936 &\quad -4.4954 &\quad -47.1983 &\quad	\phantom{-}8.7976
	\end{bmatrix*},\\
	\hat{B} &= \begin{bmatrix*}[r]
		-0.0250	&\quad 3.6432 &\quad \phantom{-0}0.2664 &\quad 0.0368\\
		0.0089 &\quad 27.1174 &\quad \phantom{-0}1.7629 &\quad -3.2636
	\end{bmatrix*}.
\end{align*}

The elements of $V_k$, $k\in \N$ are independently and uniformly distributed with $V^i\sim\mcl{U}(-0.1,0.1)$ for both $i=1,2$, where $V^i$ denotes the $i$-th element of $V$. Then we solve the following OCP
\begin{subequations}
	\begin{align}
		\min_{U_k, Y_k,G^{w_k}, g, k\in\I_{[1,N]}}
		\sum_{k=1}^N\mean[Y_k^\top QY_k& + U_k^\top R U_k]\\
		\text{s.t.}\quad \text{dynamics in form }&\text{of Lemma~\ref{lem:PropRV}},\\
		\mbb{P}[-5\leq U_k^1\leq 5]\geq &0.8, k\in\I_{[1,N]}, \label{eq:cc_u1}\\
		\mbb{P}[-4\leq U_k^2\leq 4]\geq &0.8, k\in\I_{[1,N]} \label{eq:cc_u2}
	\end{align}
\end{subequations}
in the PCE framework. The weighting matrices in the stage cost are $Q=R=I_2$. The chance constraints~\eqref{eq:cc_u1}-\eqref{eq:cc_u2} imposed on $U^1$ and $U^2$ individually are implemented via a conservative reformulation \citep{calafiore06distributionally}. The computations are done on a virtual machine with an AMD Ryzen 9 3900X 12-Core Processor with 3.79 GHz, 64 GB of RAM in \texttt{julia} using \texttt{IPOPT} \citep{waechter06implementation}.

\begin{figure}[t]
	\begin{subfigure}{\linewidth}
		\centering
		\includegraphics[width=0.85\linewidth,trim={36mm 14mm 22mm 30mm},clip]{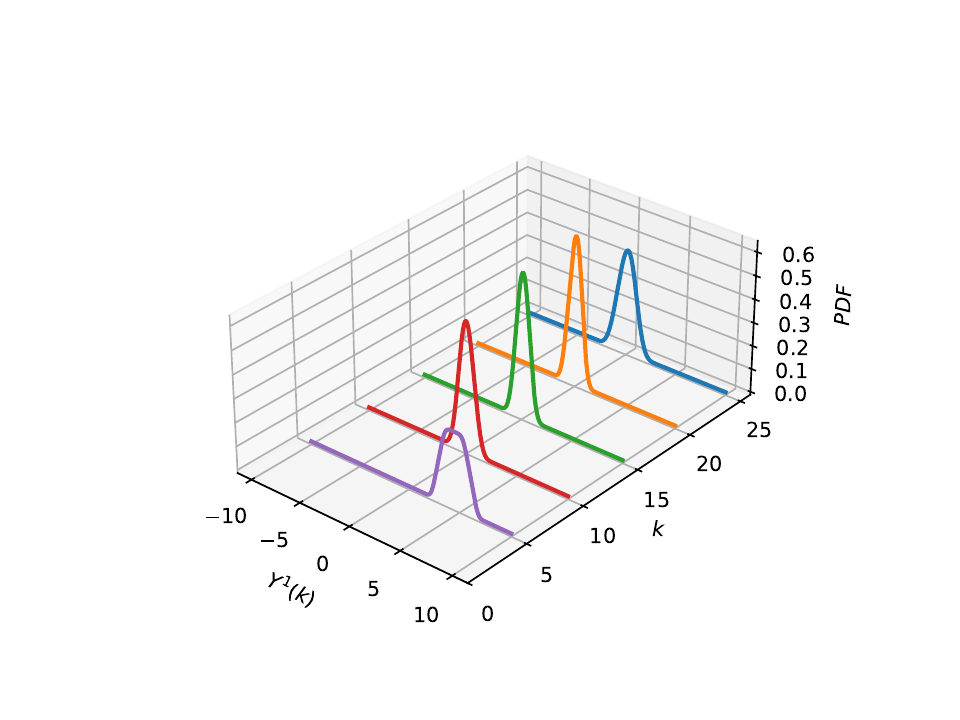}
	\end{subfigure}
	\begin{subfigure}{\linewidth}
		\centering
		\includegraphics[width=0.85\linewidth,trim={36mm 14mm 22mm 30mm},clip]{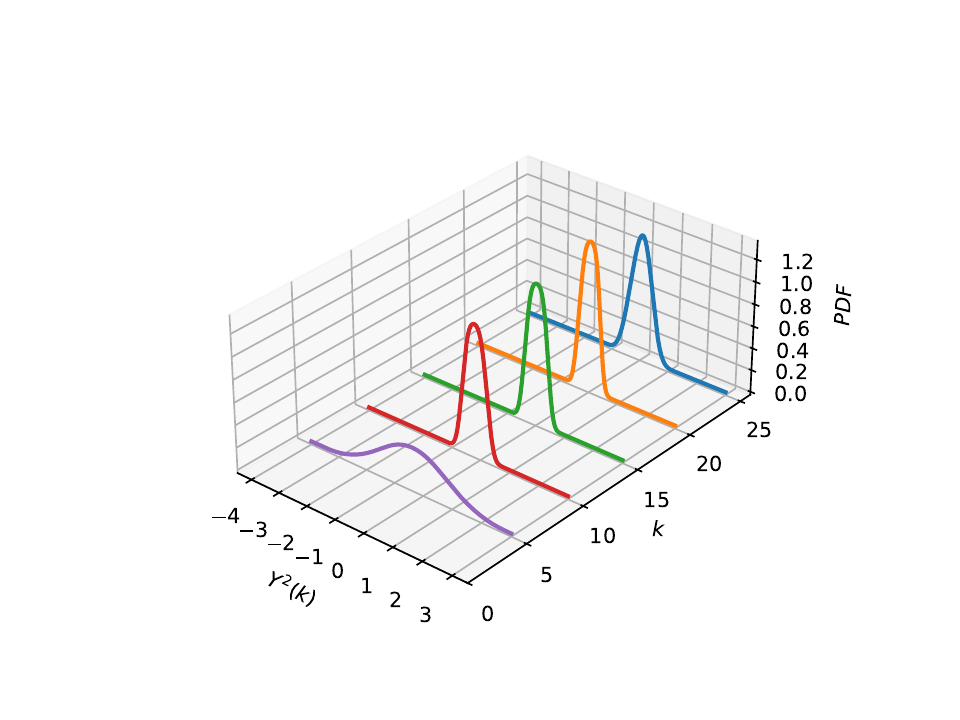}
	\end{subfigure}
	\caption{Evolution of the PDFs of the outputs $Y^1$ and $Y^2$ over horizon $N=25$.}
	\label{fig:DistributionY}
\end{figure}

\begin{figure}[t]
	\begin{center}
		\includegraphics[width=1\linewidth]{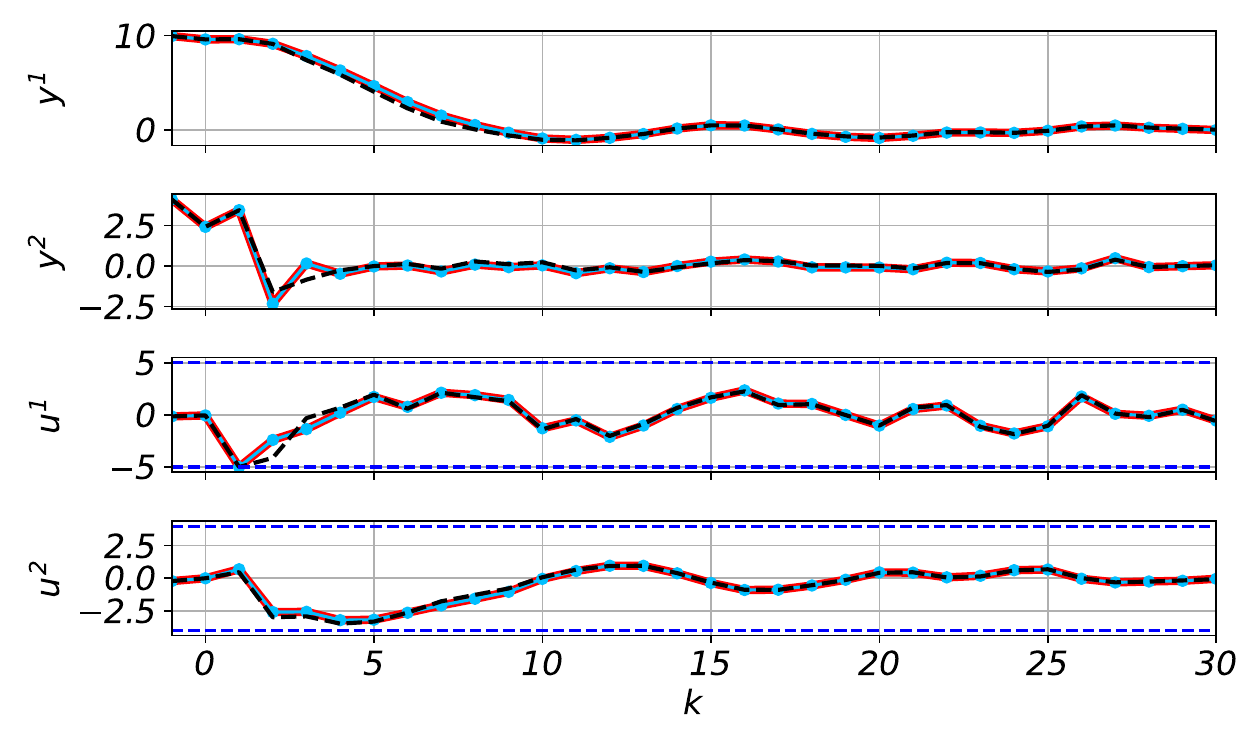}
		\caption{Helicopter example with uniformly distributed disturbance. Red-solid line: Scheme~I; blue-solid line with circle marker: Scheme~II; black-dashed line: Scheme~III; deep blue-dashed line: Chance constraints.} \label{fig:TrajComparison}		
	\end{center}
\end{figure}

First we solve the above stochastic OCP with prediction horizon $N=25$ and a randomly sampled initial condition around output $[10,10]^\top$. The evolution of the Probability Density Functions (PDF) of the system outputs $Y^1$ and $Y^2$ for the computed optimal input is depicted in Figure~\ref{fig:DistributionY}. As one can see, $Y^1$ and $Y^2$ both show narrow distributions close to 0 in the middle of the horizon.

Then we compare the following three schemes in closed loop with prediction horizon $N=10$ at each time step:
\newpage
\begin{itemize}
	\item[I)] Lemma~\ref{lem:PropRV} with data $(u,y)^{\ud}$ of system~\eqref{eq:VARXFree},
	\item[II)] Lemma~\ref{lem:StochFundam} with data $(u,y,v)^{\da}$ of system~\eqref{eq:VARXReal},
	\item[III)] Lemma~\ref{lem:PropRV} with data $(u,y)^{\da}$ of system~\eqref{eq:VARXReal}.
\end{itemize}
Note that we omit the comparison of the above schemes with robust schemes since this paper mainly focuses on predicting the future trajectories of stochastic LTI systems. Nevertheless, \citet{pan25data} have compared Scheme~II with a distributionally robust scheme. Note that to apply Scheme III using data of~\eqref{eq:VARXReal}, we first estimate the disturbance realizations and then generate a corresponding undisturbed trajectory $(u,y)^{\ud}$ of~\eqref{eq:VARXFree} via~\eqref{eq:PredictorModify}. We use the procedure proposed in Section~\ref{sec:Estimation} to find stabilizing feedback controllers for all of the above schemes and then record/generate data of length 80 to construct all Hankel matrices. The 2-norm condition numbers of the Hankel matrices for Schemes~I-III are 98.1, 686.0, and 43.7, respectively. The condition number of the Hankel matrix for Scheme~II is higher than the others due to the disturbances. Given an initial condition around $[10,10]^\top$ and the same disturbance realizations, we compute the closed-loop trajectories for 30 time steps for Schemes~I-III, see Figure~\ref{fig:TrajComparison}. For a detailed closed-loop algorithm in the data-driven setting we refer to Algorithm~1 of~\citet{pan23stochastic}.
Moreover, we compute the closed-loop trajectory for the model-based approach with an exact model. The resulting trajectory differs from that of Scheme~I by at most $3.974\cdot 10^{-4}$. Therefore, we use the solution of Scheme~I as the baseline for comparison and do not include the trajectory of the model-based approach in Fig.~\ref{fig:TrajComparison}.

\begin{table}[t!]
	\caption{Comparison of the data amount in Hankel matrices and the computation time for 1000 samplings.}
	\label{tab:ComparisonTime}
	\centering
	\begin{adjustbox}{width=1\columnwidth,center}
		\begin{tabular}{cccccc}
			\toprule
			\multirow{2}{*}{\shortstack{\\ \\ Data-driven\\ scheme}} &  \multirow{2}{*}{\shortstack{\\ \\ Number of \\ entries in $\Hankel$}}&\multicolumn{2}{c}{Computation time}  & 	\multirow{2}{*}{$J^\text{cl}$ $[-]$} & \multirow{2}{*}{\shortstack{\\ \\ Computational \\ Complexity [\unit{\micro\second}]}} \\
			\cmidrule(lr){3-4} & & Mean [\unit{\second}] & SD [\unit{\second}]   &\\
			\midrule
			I & 19044 &0.434 & 0.082  & 15.500 & 31.384\\
			II & 69552 & 0.830 & 0.069  & 15.500 & 113.334\\
			III & 19044 & 0.540 & 0.101 & 16.191 & 31.634 \\
			\bottomrule
		\end{tabular}
	\end{adjustbox}
\end{table}

We observe that the input-output trajectories for Scheme~I and Scheme~II are identical with a maximal difference of $1.413\cdot 10^{-5}$, while Scheme~III results in a slightly different trajectory due to the estimation error of disturbance realizations. We sample 1000 sequences of initial condition and disturbance realizations and summarize the computation time in Table~\ref{tab:ComparisonTime}, where SD refers to standard deviation. One can see that Scheme~I reduces computation time by an average of 47.6\% in comparison to Scheme~II with a similar standard deviation, while the average closed-loop stage cost $J^{\text{cl}}$ remains the same.
Additionally, the computation time required to find the corresponding $G$ for given initial conditions is used as an indicator of the computational complexity of Lemmas~\ref{lem:StochFundam} and~\ref{lem:PropRV}. Note that the resulting $G$ is not necessarily a feasible solution of the OCP, since the chance constraints are not taken into account. We record the computation time once at every time step in closed loop and then report the average computation time in Table~\ref{tab:ComparisonTime}. The computation time of Lemma~\ref{lem:PropRV} is reduced by 72.3\% compared to that of Lemma~\ref{lem:StochFundam}, which is consistent with the 86.3\% reduction in computational complexity suggested by~\eqref{eq:CC}, and exceeds the minimum reduction of 66.7\% for $\dimy=\dimu$ as stated in Remark~\ref{rem:Computation}. The reduction in computation time is slightly smaller than that in computational complexity by~\eqref{eq:CC}, since the computation time is also influenced by other factors, such as the numerical properties of the Hankel matrices and memory allocation overhead.
Moreover, the performance of Scheme~I and Scheme~III reports similar results in terms of the computation time, while Scheme~III has a slightly higher average cost due to the estimation error of disturbances. Importantly, by using Lemma~\ref{lem:PropRV} in Scheme~I, one can save 72.6\% amount of the data in the Hankel matrix in this simulation example compared to Scheme~II without loss of performance.
\section{Conclusion} \label{sec:Conclusion}
This paper has proposed a stochastic variant of the fundamental lemma towards stochastic LTI systems with process disturbances in the PCE framework. Based on the superposition principle, the stochastic LTI system is decoupled into a number of subsystems, each of which corresponds to a source of uncertainty, e.g., a disturbance at certain time step. Then given the causality constraints on the inputs and outputs, the additive disturbances can be converted into output initial conditions of the decoupled subsystems. This way, the dynamics of the subsystems do not contain any disturbances except for the initial conditions. Therefore, in contrast to our previous work, i.e. Lemma~\ref{lem:StochFundam}, the proposed variant of the fundamental lemma does not require any disturbance data in Hankel matrices. As a by-product we have shortened the prediction horizon for the PCE coefficients related to the disturbances, which leads to an acceleration in numerical implementations. At the same time, the application of this proposed variant of the fundamental lemma for control and a detailed comparison to other schemes including robust control remain open. Future work will also consider disturbance estimation with error bounds, the extension to nonlinear systems, and the fast computation of data-driven stochastic optimal control.

\section*{Acknowledgements}
The authors would like to thank Jonas Schie{\ss}l, Michael Heinrich Baumann, and Lars Gr{\"u}ne at the Mathematical Institute, University of Bayreuth, Germany, for their valuable feedback. The authors acknowledge funding by the Deutsche Forschungsgemeinschaft (DFG, German Research Foundation) - project number 499435839; 520388526.
                         
\bibliography{arXivRef}

\begin{thebibliography}{46}
\expandafter\ifx\csname natexlab\endcsname\relax\def\natexlab#1{#1}\fi
\providecommand{\url}[1]{\texttt{#1}}
\providecommand{\href}[2]{#2}
\providecommand{\path}[1]{#1}
\providecommand{\DOIprefix}{doi:}
\providecommand{\ArXivprefix}{arXiv:}
\providecommand{\URLprefix}{URL: }
\providecommand{\Pubmedprefix}{pmid:}
\providecommand{\doi}[1]{\href{http://dx.doi.org/#1}{\path{#1}}}
\providecommand{\Pubmed}[1]{\href{pmid:#1}{\path{#1}}}
\providecommand{\bibinfo}[2]{#2}
\ifx\xfnm\relax \def\xfnm[#1]{\unskip,\space#1}\fi
\bibitem[{Ahbe et~al.(2020)Ahbe, Iannelli \& Smith}]{ahbe20region}
\bibinfo{author}{Ahbe, E.}, \bibinfo{author}{Iannelli, A.}, \&
  \bibinfo{author}{Smith, R.} (\bibinfo{year}{2020}).
\newblock \bibinfo{title}{Region of attraction analysis of nonlinear stochastic
  systems using polynomial chaos expansion}.
\newblock {\it \bibinfo{journal}{Automatica}\/},  {\it
  \bibinfo{volume}{122}\/}, \bibinfo{pages}{109187}.
\bibitem[{Alanwar et~al.(2023)Alanwar, Koch, Allg\"ower \&
  Johansson}]{alanwar23data}
\bibinfo{author}{Alanwar, A.}, \bibinfo{author}{Koch, A.},
  \bibinfo{author}{Allg\"ower, F.}, \& \bibinfo{author}{Johansson, K.~H.}
  (\bibinfo{year}{2023}).
\newblock \bibinfo{title}{Data-driven reachability analysis from noisy data}.
\newblock {\it \bibinfo{journal}{IEEE Transactions on Automatic Control}\/},
  {\it \bibinfo{volume}{68}\/}(5), \bibinfo{pages}{3054--3069}.
\bibitem[{Alsalti et~al.(2021)Alsalti, Berberich, Lopez, Allgöwer \&
  Müller}]{alsalti21data}
\bibinfo{author}{Alsalti, M.}, \bibinfo{author}{Berberich, J.},
  \bibinfo{author}{Lopez, V.~G.}, \bibinfo{author}{Allgöwer, F.}, \&
  \bibinfo{author}{Müller, M.~A.} (\bibinfo{year}{2021}).
\newblock \bibinfo{title}{Data-based system analysis and control of flat
  nonlinear systems}.
\newblock In {\it \bibinfo{booktitle}{60th IEEE Conference on Decision and
  Control (CDC)}\/} (pp. \bibinfo{pages}{1484--1489}).
\bibitem[{Berberich et~al.(2022)Berberich, K{\"o}hler, M{\"u}ller \&
  Allg{\"o}wer}]{berberich22linear}
\bibinfo{author}{Berberich, J.}, \bibinfo{author}{K{\"o}hler, J.},
  \bibinfo{author}{M{\"u}ller, M.~A.}, \& \bibinfo{author}{Allg{\"o}wer, F.}
  (\bibinfo{year}{2022}).
\newblock \bibinfo{title}{Linear tracking {MPC} for nonlinear systems--part
  {II}: The data-driven case}.
\newblock {\it \bibinfo{journal}{IEEE Transactions on Automatic Control}\/},
  {\it \bibinfo{volume}{67}\/}(9), \bibinfo{pages}{4406--4421}.
\bibitem[{Breschi et~al.(2023)Breschi, Chiuso \& Formentin}]{breschi23data}
\bibinfo{author}{Breschi, V.}, \bibinfo{author}{Chiuso, A.}, \&
  \bibinfo{author}{Formentin, S.} (\bibinfo{year}{2023}).
\newblock \bibinfo{title}{Data-driven predictive control in a stochastic
  setting: a unified framework}.
\newblock {\it \bibinfo{journal}{Automatica}\/},  {\it
  \bibinfo{volume}{152}\/}, \bibinfo{pages}{110961}.
\bibitem[{Calafiore \& Ghaoui(2006)}]{calafiore06distributionally}
\bibinfo{author}{Calafiore, G.~C.}, \& \bibinfo{author}{Ghaoui, L.~E.}
  (\bibinfo{year}{2006}).
\newblock \bibinfo{title}{On distributionally robust chance-constrained linear
  programs}.
\newblock {\it \bibinfo{journal}{Journal of Optimization Theory and
  Applications}\/},  {\it \bibinfo{volume}{130}\/}(1), \bibinfo{pages}{1--22}.
\bibitem[{Cameron \& Martin(1947)}]{cameron47orthogonal}
\bibinfo{author}{Cameron, R.}, \& \bibinfo{author}{Martin, W.}
  (\bibinfo{year}{1947}).
\newblock \bibinfo{title}{The orthogonal development of non-linear functionals
  in series of {F}ourier-{H}ermite functionals}.
\newblock {\it \bibinfo{journal}{Annals of Mathematics}\/},  (pp.
  \bibinfo{pages}{385--392}).
\bibitem[{Casals et~al.(2012)Casals, Garc{\'\i}a-Hiernaux \&
  Jerez}]{casals12from}
\bibinfo{author}{Casals, J.}, \bibinfo{author}{Garc{\'\i}a-Hiernaux, A.}, \&
  \bibinfo{author}{Jerez, M.} (\bibinfo{year}{2012}).
\newblock \bibinfo{title}{From general state-space to {VARMAX} models}.
\newblock {\it \bibinfo{journal}{Mathematics and Computers in Simulation}\/},
  {\it \bibinfo{volume}{82}\/}(5), \bibinfo{pages}{924--936}.
\bibitem[{Chiuso et~al.(2025)Chiuso, Fabris, Breschi \&
  Formentin}]{chiuso25harnessing}
\bibinfo{author}{Chiuso, A.}, \bibinfo{author}{Fabris, M.},
  \bibinfo{author}{Breschi, V.}, \& \bibinfo{author}{Formentin, S.}
  (\bibinfo{year}{2025}).
\newblock \bibinfo{title}{Harnessing uncertainty for a separation principle in
  direct data-driven predictive control}.
\newblock {\it \bibinfo{journal}{Automatica}\/},  {\it
  \bibinfo{volume}{173}\/}, \bibinfo{pages}{112070}.
\bibitem[{De~Persis \& Tesi(2020)}]{depersis20formulas}
\bibinfo{author}{De~Persis, C.}, \& \bibinfo{author}{Tesi, P.}
  (\bibinfo{year}{2020}).
\newblock \bibinfo{title}{Formulas for data-driven control: Stabilization,
  optimality, and robustness}.
\newblock {\it \bibinfo{journal}{IEEE Transactions on Automatic Control}\/},
  {\it \bibinfo{volume}{65}\/}(3), \bibinfo{pages}{909--924}.
\bibitem[{Disar{\`o} \& Valcher(2025)}]{disaro24equivalence}
\bibinfo{author}{Disar{\`o}, G.}, \& \bibinfo{author}{Valcher, M.~E.}
  (\bibinfo{year}{2025}).
\newblock \bibinfo{title}{On the equivalence of model-based and data-driven
  approaches to the design of unknown-input observers}.
\newblock {\it \bibinfo{journal}{IEEE Transactions on Automatic Control}\/},
  {\it \bibinfo{volume}{70}\/}(3), \bibinfo{pages}{2074--2081}.
\bibitem[{D{\"o}rfler et~al.(2023)D{\"o}rfler, Tesi \&
  De~Persis}]{doerfler23on}
\bibinfo{author}{D{\"o}rfler, F.}, \bibinfo{author}{Tesi, P.}, \&
  \bibinfo{author}{De~Persis, C.} (\bibinfo{year}{2023}).
\newblock \bibinfo{title}{On the certainty-equivalence approach to direct
  data-driven {LQR} design}.
\newblock {\it \bibinfo{journal}{IEEE Transactions on Automatic Control}\/},
  {\it \bibinfo{volume}{68}\/}(12), \bibinfo{pages}{7989--7996}.
\bibitem[{Dörfler et~al.(2023)Dörfler, Coulson \&
  Markovsky}]{doerfler23bridging}
\bibinfo{author}{Dörfler, F.}, \bibinfo{author}{Coulson, J.}, \&
  \bibinfo{author}{Markovsky, I.} (\bibinfo{year}{2023}).
\newblock \bibinfo{title}{Bridging direct and indirect data-driven control
  formulations via regularizations and relaxations}.
\newblock {\it \bibinfo{journal}{IEEE Transactions on Automatic Control}\/},
  {\it \bibinfo{volume}{68}\/}(2), \bibinfo{pages}{883--897}.
\bibitem[{Ernst et~al.(2012)Ernst, Mugler, Starkloff \&
  Ullmann}]{ernst12convergence}
\bibinfo{author}{Ernst, O.}, \bibinfo{author}{Mugler, A.},
  \bibinfo{author}{Starkloff, H.-J.}, \& \bibinfo{author}{Ullmann, E.}
  (\bibinfo{year}{2012}).
\newblock \bibinfo{title}{On the convergence of generalized polynomial chaos
  expansions}.
\newblock {\it \bibinfo{journal}{ESAIM: Mathematical Modelling and Numerical
  Analysis}\/},  {\it \bibinfo{volume}{46}\/}(2), \bibinfo{pages}{317--339}.
\bibitem[{Fagiano \& Khammash(2012)}]{fagiano12nonlinear}
\bibinfo{author}{Fagiano, L.}, \& \bibinfo{author}{Khammash, M.}
  (\bibinfo{year}{2012}).
\newblock \bibinfo{title}{Nonlinear stochastic model predictive control via
  regularized polynomial chaos expansions}.
\newblock In {\it \bibinfo{booktitle}{51st IEEE Conference on Decision and
  Control (CDC)}\/} (pp. \bibinfo{pages}{142--147}).
\newblock \bibinfo{organization}{IEEE}.
\bibitem[{Faulwasser et~al.(2023)Faulwasser, Ou, Pan, Schmitz \&
  Worthmann}]{faulwasser23behavioral}
\bibinfo{author}{Faulwasser, T.}, \bibinfo{author}{Ou, R.},
  \bibinfo{author}{Pan, G.}, \bibinfo{author}{Schmitz, P.}, \&
  \bibinfo{author}{Worthmann, K.} (\bibinfo{year}{2023}).
\newblock \bibinfo{title}{Behavioral theory for stochastic systems? {A}
  data-driven journey from willems to wiener and back again}.
\newblock {\it \bibinfo{journal}{Annual Reviews in Control}\/},  {\it
  \bibinfo{volume}{55}\/}, \bibinfo{pages}{92--117}.
\bibitem[{Forssell et~al.(1999)Forssell, Gustafsson \&
  McKelvey}]{forssell99time}
\bibinfo{author}{Forssell, U.}, \bibinfo{author}{Gustafsson, F.}, \&
  \bibinfo{author}{McKelvey, T.} (\bibinfo{year}{1999}).
\newblock \bibinfo{title}{Time-domain identification of dynamic
  errors-in-variables systems using periodic excitation signals}.
\newblock {\it \bibinfo{journal}{IFAC Proceedings Volumes}\/},  {\it
  \bibinfo{volume}{32}\/}(2), \bibinfo{pages}{4189--4194}.
\newblock \bibinfo{note}{14th IFAC World Congress 1999, Beijing, Chia, 5-9
  July}.
\bibitem[{Fristedt \& Gray(1997)}]{fristedt13modern}
\bibinfo{author}{Fristedt, B.}, \& \bibinfo{author}{Gray, L.}
  (\bibinfo{year}{1997}).
\newblock {\it \bibinfo{title}{A Modern Approach to Probability Theory}\/}.
\newblock \bibinfo{publisher}{Birkh{\"a}user Boston}.
\bibitem[{Hong \& Fan(2016)}]{hong16probabilistic}
\bibinfo{author}{Hong, T.}, \& \bibinfo{author}{Fan, S.}
  (\bibinfo{year}{2016}).
\newblock \bibinfo{title}{Probabilistic electric load forecasting: A tutorial
  review}.
\newblock {\it \bibinfo{journal}{International Journal of Forecasting}\/},
  {\it \bibinfo{volume}{32}\/}(3), \bibinfo{pages}{914--938}.
\bibitem[{Kallenberg(2021)}]{kallenberg21foundations}
\bibinfo{author}{Kallenberg, O.} (\bibinfo{year}{2021}).
\newblock {\it \bibinfo{title}{Foundations of Modern Probability}\/}.
\newblock \bibinfo{publisher}{Springer}.
\bibitem[{Keel et~al.(1988)Keel, Bhattacharyya \& Howze}]{keel88robust}
\bibinfo{author}{Keel, L.}, \bibinfo{author}{Bhattacharyya, S.}, \&
  \bibinfo{author}{Howze, J.} (\bibinfo{year}{1988}).
\newblock \bibinfo{title}{Robust control with structure perturbations}.
\newblock {\it \bibinfo{journal}{IEEE Transactions on Automatic Control}\/},
  {\it \bibinfo{volume}{33}\/}(1), \bibinfo{pages}{68--78}.
\bibitem[{Kerz et~al.(2023)Kerz, Teutsch, Br{\"u}digam, Leibold \&
  Wollherr}]{kerz23data}
\bibinfo{author}{Kerz, S.}, \bibinfo{author}{Teutsch, J.},
  \bibinfo{author}{Br{\"u}digam, T.}, \bibinfo{author}{Leibold, M.}, \&
  \bibinfo{author}{Wollherr, D.} (\bibinfo{year}{2023}).
\newblock \bibinfo{title}{Data-driven tube-based stochastic predictive
  control}.
\newblock {\it \bibinfo{journal}{IEEE Open Journal of Control Systems}\/},
  {\it \bibinfo{volume}{2}\/}, \bibinfo{pages}{185--199}.
\bibitem[{Kim et~al.(2013)Kim, Shen, Nagy \& Braatz}]{kim13wiener}
\bibinfo{author}{Kim, K.~K.}, \bibinfo{author}{Shen, D.~E.},
  \bibinfo{author}{Nagy, Z.~K.}, \& \bibinfo{author}{Braatz, R.~D.}
  (\bibinfo{year}{2013}).
\newblock \bibinfo{title}{Wiener's polynomial chaos for the analysis and
  control of nonlinear dynamical systems with probabilistic uncertainties
  [{H}istorical {P}erspectives]}.
\newblock {\it \bibinfo{journal}{IEEE Control Systems Magazine}\/},  {\it
  \bibinfo{volume}{33}\/}(5), \bibinfo{pages}{58--67}.
\bibitem[{Kl\"adtke \& Schulze~Darup(2023)}]{kladtke23implicit}
\bibinfo{author}{Kl\"adtke, M.}, \& \bibinfo{author}{Schulze~Darup, M.}
  (\bibinfo{year}{2023}).
\newblock \bibinfo{title}{Implicit predictors in regularized data-driven
  predictive control}.
\newblock {\it \bibinfo{journal}{IEEE Control Systems Letters}\/},  {\it
  \bibinfo{volume}{7}\/}, \bibinfo{pages}{2479--2484}.
\bibitem[{Lian et~al.(2021)Lian, Wang \& Jones}]{lian21koopman}
\bibinfo{author}{Lian, Y.}, \bibinfo{author}{Wang, R.}, \&
  \bibinfo{author}{Jones, C.~N.} (\bibinfo{year}{2021}).
\newblock \bibinfo{title}{Koopman based data-driven predictive control}.
\newblock {\it \bibinfo{journal}{arXiv:2102.05122}\/}, .
\bibitem[{Ljung(1999)}]{ljung99system}
\bibinfo{author}{Ljung, L.} (\bibinfo{year}{1999}).
\newblock {\it \bibinfo{title}{System Identification: Theory for the User}\/}.
\newblock (\bibinfo{edition}{2nd} ed.).
\newblock \bibinfo{publisher}{Prentice Hall PTR}.
\bibitem[{Markovsky \& D{\"o}rfler(2021)}]{markovsky21behavioral}
\bibinfo{author}{Markovsky, I.}, \& \bibinfo{author}{D{\"o}rfler, F.}
  (\bibinfo{year}{2021}).
\newblock \bibinfo{title}{Behavioral systems theory in data-driven analysis,
  signal processing, and control}.
\newblock {\it \bibinfo{journal}{Annual Reviews in Control}\/},  {\it
  \bibinfo{volume}{52}\/}, \bibinfo{pages}{42--64}.
\bibitem[{Molodchyk \& Faulwasser(2024)}]{molodchyk24exploring}
\bibinfo{author}{Molodchyk, O.}, \& \bibinfo{author}{Faulwasser, T.}
  (\bibinfo{year}{2024}).
\newblock \bibinfo{title}{Exploring the links between the fundamental lemma and
  kernel regression}.
\newblock {\it \bibinfo{journal}{IEEE Control Systems Letters}\/},  {\it
  \bibinfo{volume}{8}\/}, \bibinfo{pages}{2045--2050}.
\bibitem[{Ou et~al.(2026)Ou, Januzi, Schießl, Baumann, Grüne \&
  Faulwasser}]{ou26polyocp}
\bibinfo{author}{Ou, R.}, \bibinfo{author}{Januzi, L.},
  \bibinfo{author}{Schießl, J.}, \bibinfo{author}{Baumann, M.~H.},
  \bibinfo{author}{Grüne, L.}, \& \bibinfo{author}{Faulwasser, T.}
  (\bibinfo{year}{2026}).
\newblock \bibinfo{title}{{PolyOCP.jl} -- a julia package for stochastic {OCPs}
  and {MPC}}.
\newblock In {\it \bibinfo{booktitle}{2026 European Control Conference (ECC),
  accepted}\/}.
\bibitem[{Ou et~al.(2025)Ou, Schie{\ss}l, Baumann, Gr{\"u}ne \&
  Faulwasser}]{ou25polynomial}
\bibinfo{author}{Ou, R.}, \bibinfo{author}{Schie{\ss}l, J.},
  \bibinfo{author}{Baumann, M.~H.}, \bibinfo{author}{Gr{\"u}ne, L.}, \&
  \bibinfo{author}{Faulwasser, T.} (\bibinfo{year}{2025}).
\newblock \bibinfo{title}{A polynomial chaos approach to stochastic {LQ}
  optimal control: Error bounds and infinite-horizon results}.
\newblock {\it \bibinfo{journal}{Automatica}\/},  {\it
  \bibinfo{volume}{174}\/}, \bibinfo{pages}{112117}.
\bibitem[{Pan et~al.(2023)Pan, Ou \& Faulwasser}]{pan23stochastic}
\bibinfo{author}{Pan, G.}, \bibinfo{author}{Ou, R.}, \&
  \bibinfo{author}{Faulwasser, T.} (\bibinfo{year}{2023}).
\newblock \bibinfo{title}{On a stochastic fundamental lemma and its use for
  data-driven optimal control}.
\newblock {\it \bibinfo{journal}{IEEE Transactions on Automatic Control}\/},
  {\it \bibinfo{volume}{68}\/}(10), \bibinfo{pages}{5922--5937}.
\bibitem[{Pan et~al.(2025)Pan, Ou \& Faulwasser}]{pan25data}
\bibinfo{author}{Pan, G.}, \bibinfo{author}{Ou, R.}, \&
  \bibinfo{author}{Faulwasser, T.} (\bibinfo{year}{2025}).
\newblock \bibinfo{title}{On data-driven stochastic output-feedback predictive
  control}.
\newblock {\it \bibinfo{journal}{IEEE Transactions on Automatic Control}\/},
  {\it \bibinfo{volume}{70}\/}(5), \bibinfo{pages}{2948--2962}.
\bibitem[{Paulson et~al.(2014)Paulson, Mesbah, Streif, Findeisen \&
  Braatz}]{paulson14fast}
\bibinfo{author}{Paulson, J.}, \bibinfo{author}{Mesbah, A.},
  \bibinfo{author}{Streif, S.}, \bibinfo{author}{Findeisen, R.}, \&
  \bibinfo{author}{Braatz, R.} (\bibinfo{year}{2014}).
\newblock \bibinfo{title}{Fast stochastic model predictive control of
  high-dimensional systems}.
\newblock In {\it \bibinfo{booktitle}{53rd IEEE Conference on Decision and
  Control (CDC)}\/} (pp. \bibinfo{pages}{2802--2809}).
\newblock \bibinfo{organization}{IEEE}.
\bibitem[{Paulson et~al.(2015)Paulson, Streif \& Mesbah}]{paulson15stability}
\bibinfo{author}{Paulson, J.~A.}, \bibinfo{author}{Streif, S.}, \&
  \bibinfo{author}{Mesbah, A.} (\bibinfo{year}{2015}).
\newblock \bibinfo{title}{Stability for receding-horizon stochastic model
  predictive control}.
\newblock In {\it \bibinfo{booktitle}{2015 American Control Conference}\/} (pp.
  \bibinfo{pages}{937--943}).
\newblock \bibinfo{organization}{IEEE}.
\bibitem[{Phan \& Longman(1996)}]{phan96relationship}
\bibinfo{author}{Phan, M.}, \& \bibinfo{author}{Longman, R.}
  (\bibinfo{year}{1996}).
\newblock \bibinfo{title}{Relationship between state-space and input-output
  models via observer {M}arkov parameters}.
\newblock {\it \bibinfo{journal}{WIT Transactions on The Built Environment}\/},
   {\it \bibinfo{volume}{22}\/}.
\bibitem[{Sadamoto(2023)}]{sadamoto23equivalence}
\bibinfo{author}{Sadamoto, T.} (\bibinfo{year}{2023}).
\newblock \bibinfo{title}{On equivalence of data informativity for
  identification and data-driven control of partially observable systems}.
\newblock {\it \bibinfo{journal}{IEEE Transactions on Automatic Control}\/},
  {\it \bibinfo{volume}{68}\/}(7), \bibinfo{pages}{4289--4296}.
\bibitem[{Sullivan(2015)}]{sullivan15introduction}
\bibinfo{author}{Sullivan, T.} (\bibinfo{year}{2015}).
\newblock {\it \bibinfo{title}{{Introduction to Uncertainty Quantification}}\/}
  volume~\bibinfo{volume}{63}.
\newblock \bibinfo{publisher}{Springer International}.
\bibitem[{Teutsch et~al.(2024)Teutsch, Kerz, Wollherr \&
  Leibold}]{teutsch24sampling}
\bibinfo{author}{Teutsch, J.}, \bibinfo{author}{Kerz, S.},
  \bibinfo{author}{Wollherr, D.}, \& \bibinfo{author}{Leibold, M.}
  (\bibinfo{year}{2024}).
\newblock \bibinfo{title}{Sampling-based stochastic data-driven predictive
  control under data uncertainty}.
\newblock {\it \bibinfo{journal}{arXiv:2402.00681}\/}, .
\bibitem[{Turan \& Ferrari-Trecate(2022)}]{turan22data}
\bibinfo{author}{Turan, M.~S.}, \& \bibinfo{author}{Ferrari-Trecate, G.}
  (\bibinfo{year}{2022}).
\newblock \bibinfo{title}{Data-driven unknown-input observers and state
  estimation}.
\newblock {\it \bibinfo{journal}{IEEE Control Systems Letters}\/},  {\it
  \bibinfo{volume}{6}\/}, \bibinfo{pages}{1424--1429}.
\bibitem[{Verhoek et~al.(2021)Verhoek, Abbas, T{\'o}th \&
  Haesaert}]{verhoek21data}
\bibinfo{author}{Verhoek, C.}, \bibinfo{author}{Abbas, H.~S.},
  \bibinfo{author}{T{\'o}th, R.}, \& \bibinfo{author}{Haesaert, S.}
  (\bibinfo{year}{2021}).
\newblock \bibinfo{title}{Data-driven predictive control for linear
  parameter-varying systems}.
\newblock {\it \bibinfo{journal}{IFAC-PapersOnLine}\/},  {\it
  \bibinfo{volume}{54}\/}(8), \bibinfo{pages}{101--108}.
\bibitem[{W{\"a}chter \& Biegler(2006)}]{waechter06implementation}
\bibinfo{author}{W{\"a}chter, A.}, \& \bibinfo{author}{Biegler, L.~T.}
  (\bibinfo{year}{2006}).
\newblock \bibinfo{title}{On the implementation of an interior-point filter
  line-search algorithm for large-scale nonlinear programming}.
\newblock {\it \bibinfo{journal}{Mathematical Programming}\/},  {\it
  \bibinfo{volume}{106}\/}(1), \bibinfo{pages}{25--57}.
\bibitem[{Wiener(1938)}]{wiener38homogeneous}
\bibinfo{author}{Wiener, N.} (\bibinfo{year}{1938}).
\newblock \bibinfo{title}{The homogeneous chaos}.
\newblock {\it \bibinfo{journal}{American Journal of Mathematics}\/},  (pp.
  \bibinfo{pages}{897--936}).
\bibitem[{Willems(2013)}]{willems13open}
\bibinfo{author}{Willems, J.~C.} (\bibinfo{year}{2013}).
\newblock \bibinfo{title}{Open stochastic systems}.
\newblock {\it \bibinfo{journal}{IEEE Transactions on Automatic Control}\/},
  {\it \bibinfo{volume}{58}\/}(2), \bibinfo{pages}{406--421}.
\bibitem[{Willems et~al.(2005)Willems, Rapisarda, Markovsky \&
  De~Moor}]{willems05note}
\bibinfo{author}{Willems, J.~C.}, \bibinfo{author}{Rapisarda, P.},
  \bibinfo{author}{Markovsky, I.}, \& \bibinfo{author}{De~Moor, B. L.~M.}
  (\bibinfo{year}{2005}).
\newblock \bibinfo{title}{A note on persistency of excitation}.
\newblock {\it \bibinfo{journal}{Systems \& Control Letters}\/},  {\it
  \bibinfo{volume}{54}\/}(4), \bibinfo{pages}{325--329}.
\bibitem[{Wolff et~al.(2024)Wolff, Lopez \& M{\"u}ller}]{wolff24robust}
\bibinfo{author}{Wolff, T.~M.}, \bibinfo{author}{Lopez, V.~G.}, \&
  \bibinfo{author}{M{\"u}ller, M.~A.} (\bibinfo{year}{2024}).
\newblock \bibinfo{title}{Robust data-driven moving horizon estimation for
  linear discrete-time systems}.
\newblock {\it \bibinfo{journal}{IEEE Transactions on Automatic Control}\/},
  {\it \bibinfo{volume}{69}\/}(8), \bibinfo{pages}{5598--5604}.
\bibitem[{Xiu \& Karniadakis(2002)}]{xiu02wiener}
\bibinfo{author}{Xiu, D.}, \& \bibinfo{author}{Karniadakis, G.}
  (\bibinfo{year}{2002}).
\newblock \bibinfo{title}{The {W}iener--{A}skey polynomial chaos for stochastic
  differential equations}.
\newblock {\it \bibinfo{journal}{SIAM Journal on Scientific Computing}\/},
  {\it \bibinfo{volume}{24}\/}(2), \bibinfo{pages}{619--644}.

\end{thebibliography}
\end{document}